\newcommand{\DEFAULTLINE}[1]{\State \textbf{default:} }
\newcommand{\oursystem}{{\mathbf{iSAPP}}}
\newcommand{\some}[3]{#1_{#2}, \ldots, #1_{#3}}
\newcommand{\ssome}[4]{#1_{#2} #4 \ldots #4 #1_{#3}}
\newcommand{\many}[2]{\some{#1}{1}{#2}}
\newcommand{\smany}[3]{\ssome{#1}{1}{#2}{#3}}
\newcommand{\Alphabet}{\Sigma}
\newcommand{\letone}{a}
\newcommand{\regone}{\mathbf{r}}
\newcommand{\opone}{\mathtt{op}}
\newcommand{\Functions}{\mathtt{Functions}}
\newcommand{\definedFunctions}{\mathtt{DefinedFunctions}}
\newcommand{\deffunction}[4]{ \textbf{def }{#1}\textbf{ in }{(#2)}\,\,\{{#3}\}\,\,\textbf{out}{(#4)} }
\newcommand{\stackone}{\mathtt{S}}
\newcommand{\termstack}[1]{\langle#1\rangle}
\newcommand{\isempty}[1]{\mathtt{isempty?}(#1)}
\newcommand{\Booleans}{\mathtt{Boolean Exp}}
\newcommand{\true}{\mathsf{true}}
\newcommand{\false}{\mathsf{false}}
\newcommand{\boolexpr}{b}
\newcommand{\boolop}{\mathtt{op?}}
\newcommand{\rand}{\mathtt{rand}}
\newcommand{\Expressions}{\mathtt{Expressions}}
\newcommand{\exprzero}{e}
\newcommand{\htop}[1]{\mathtt{top}({#1})}
\newcommand{\Commands}{\mathtt{Commands}}
\newcommand{\comzero}{C}
\newcommand{\iassign}[2]{#1:=#2}
\newcommand{\iskip}{\textbf{skip}}
\newcommand{\iloop}[2]{\textbf{loop}\,{#1}\,\{{#2}\}}
\newcommand{\iifthenelse}[3]{\textbf{If }\,#1\,\textbf{ Then
  }\,#2\,\textbf{ Else }\,#3}
\newcommand{\pop}[1]{\textbf{pop}({#1})}
\newcommand{\push}[1]{\textbf{push}(#1)}
\newcommand{\call}[1]{\textbf{call}(#1)}
\newcommand{\test}[1]{{\textbf{equal?}(#1)}}
\newcommand{\matrixsubstitution}[3]{ {#1} \xleftarrow{#2}{#3} }
\newcommand{\termlist}[1]{\langle#1\rangle}
\newcommand{\plus}{+}
\newcommand{\matrixID}[0]{\mathbf{I}}
\newcommand{\vectorID}[0]{\mathbf{V}^{\mathbf{I}}}
\newcommand{\vectorA}[0]{\mathbf{V}^{\mathbf{A}}}
\newcommand{\vectorO}[0]{\mathbf{V}^{\mathbf{0}}}
\newcommand{\matrixzero}[0]{\mathbf{0}}
\newcommand{\nodep}{0}
\newcommand{\identity}{L}
\newcommand{\added}{A}
\newcommand{\multiplied}{M}
\newcommand{\valuezero}{\textit{v}}
\newcommand{\values}{\mathtt{Values}}
\newcommand{\variable}{\mathtt{Stacks}}
\newcommand{\PP}[0]{\ensuremath{\mathbf{PP}}}
\newcommand{\FP}{\ensuremath{\mathbf{FP}}}
\newcommand{\BPP}{\ensuremath{\mathbf{BPP}}}
\newcommand{\PSPACE}{\ensuremath{\mathbf{PSPACE}}}
\newcommand{\LOGSPACE}{\ensuremath{\mathbf{LOGSPACE}}}
\newcommand{\sigmazero}{\sigma_0}
\newcommand{\statefunc}{\sigma}
\newcommand{\statefuncone}{\sigma_1}
\newcommand{\statefunctwo}{\sigma_2}
\newcommand{\funone}{f}
\newcommand{\bigvar}{S}
\newcommand{\boolzero}{b}
\newcommand{\boolone}{b_1}
\newcommand{\exprone}{e_1}
\newcommand{\comone}{C_1}
\newcommand{\comtwo}{C_2}
\newcommand{\constone}{c}
\newcommand{\indexone}{i}
\newcommand{\indextwo}{j}
\newcommand{\indexthree}{k}
\newcommand{\indexfour}{p}
\newcommand{\row}{\indexone}
\newcommand{\column}{\indextwo}
\newcommand{\probone}{\alpha}
\newcommand{\probtwo}{\beta}
\newcommand{\multipolyone}{P}
\newcommand{\multipolytwo}{Q}
\newcommand{\multipolythree}{R}
\newcommand{\polyone}{p}
\newcommand{\polytwo}{q}
\newcommand{\polythree}{r}
\newcommand{\polyfour}{s}
\newcommand{\defeq}{::=}
\newcommand{\closure}[1]{{#1}^\cup}
\newcommand{\mergedown}[2]{ {#1}^{\downarrow {#2}} }
\newcommand{\vectorasjust}[2]{ \left[\stackrel{#1}{#2}\right] }
\newcommand{\matrixone}{\mathbf A}
\newcommand{\matrixtwo}{\mathbf B}
\newcommand{\vectorzero}{{V}}
\newcommand{\vectortwo}{{U}}
\newcommand{\polabstr}[1]{ \lceil {#1} \rceil}
\newcommand{\realint}[2]{\mathbb{R}_{[0,1]}}
\newcommand{\artred}{\rightarrow_a}
\newcommand{\bolred}[1]{\rightarrow_b^{{#1}}}
\newcommand{\comred}[1]{\rightarrow_c^{{#1}}}
\newcommand{\comdistr}[0]{\rightarrow_\distrone }
\newcommand{\polyunion}[2]{ ({#1}\oplus{#2})  }
\newcommand{\polynormform}[1]{ \mathtt{{#1}} }
\newcommand{\subst}[3]{#1[#2/#3]}
\newcommand{\distrone}{\mathscr{D}}
\newcommand{\distrtwo}{\mathscr{E}}
\newcommand{\multi}[1]{\overline{#1}}
\newcommand{\derone}{\pi}
\newcommand{\size}[1]{|#1|}
\newcommand{\taper}{\mathbf{tape}_r}
\newcommand{\tapel}{\mathbf{tape}_l}
\newcommand{\tapeh}{\mathbf{tape}_h}
\newcommand{\statemachine}{\mathbf{M_{state}}}
\newcommand{\dimensionone}{\numeone}
\newcommand{\dimensiontwo}{\numetwo}
\newcommand{\numeone}{n}
\newcommand{\numetwo}{m}
\newcommand{\NN}{\mathbb{N}}
\newcommand{\pair}[2]{\langle #1,#2 \rangle  }
\newcounter{number}
\title{{A polytime complexity analyser for Probabilistic Polynomial Time over imperative stack programs.}}
\titlerunning{{$\oursystem$, a complete polytime complexity verifier tool for PP.}}
\author[1]{J.Y. Moyen}
\author[2]{P. Parisen Toldin}
\affil[1]{{LIPN, UMR 7030, CNRS, Universit\`{e} Paris 13
 	  F-93430 Villetaneuse, France.
         \textit{Jean-Yves.Moyen@lipn.univ-paris13.fr}}}
\affil[2]{Dipartimento di Scienze dell'Informazione, Universit\`a di Bologna
            \'Equipe FOCUS, INRIA Sophia Antipolis
            Mura Anteo Zamboni 7, 40127 Bologna, Italy.
            \textit{parisent@cs.unibo.it}}
\keywords{ICC, Probabilistic Polytime, Static verifier}
\subjclass{Theory, Verification}
\begin{document}

\maketitle

\begin{abstract}
  We present $\oursystem$ (Imperative Static Analyser for Probabilistic
  Polynomial Time), a complexity verifier tool that is sound and
  extensionally complete for the Probabilistic Polynomial Time ($\PP$)
  complexity class. $\oursystem$ works on an imperative
  programming language for stack machines.
  The certificate of polynomiality can be built in
  polytime, with respect to the number of stacks used.
\end{abstract}
 
\section{Introduction}

One of the crucial problem in program analysis is to understand how much
time it takes a program to complete its run. Having a bound on running
time or on space consumption is really useful, specially in fields of
information technology working with limited computing power.
Solving this problem for every program is well known to be
undecidable. The best we can do is to create an analyser for a
particular complexity class able to say ``yes'', ``no'', or ``don't
know''. Creating such an analyser can be quite easy: the one saying
every time ``don't know'' is a static complexity analyser. The most
important thing is to create one that answers ``don't know'' the minimum
number of time as possible.

We try to combine this problem with techniques derived from Implicit
Computational Complexity (ICC).  Such research field combines
computational complexity with mathematical logic, in order to give
machine independent characterisations of complexity classes. ICC has
been successfully applied to various complexity classes such as
$\FP$~\cite{Bellantoni1992, Leivant1993, BMMTCS},
$\PSPACE$~\cite{Leivant1995}, $\LOGSPACE$~\cite{Jones1999}.

ICC systems usually work by restricting the constructions allowed in a
program. This \emph{de facto} creates a small programming language whose
programs all share a given complexity property (such as computing in
polynomial time). ICC systems are normally \textbf{extensionally
  complete}: for each function computable within the given complexity
bound, there exists one program in the system computing this
function. They also aim at \textbf{intentional completeness}: each
program computing within the bound should be recognised by the
system. Full intentional completeness, however, is undecidable and ICC
systems try to capture as many programs as possible (that is, answer
``don't know'' as little time as possible).

Having an ICC system characterising a complexity class $\mathcal{C}$ is
a good starting point for developing a static complexity
analyser.
There is a large literature on static analysers for complexity
bounds. We develop an analysis recalling methods from \cite{Jones2009a,
  BenAmram2008, Kristiansen2005}. Comparatively to these approaches our
system works with a more concrete language of stacks, where variables,
constants and commands are defined; we are also sound and complete with
respect to the Probabilistic Polynomial time complexity class ($\PP$) \cite{Gill77}.

We introduce a probabilistic variation of the \textsc{Loop}
language. Randomised computations are nowadays widely used and most of
efficient algorithms are written using stochastic information. There are
several probabilistic complexity classes and $\BPP$ (which stands for
Bounded-error Probabilistic Polytime) \cite{Gill77} is considered close to the informal
notion of feasibility. Our work would be a first step into the direction
of being able to capture real feasible programs solving problems in
$\BPP$ ($\BPP \subseteq \PP$) \cite{Gill77}.

Similar work has been done in~\cite{PPTUDL2011} with characterisation of
complexity class $\PP$; This work gives a characterisation of complexity
class $\PP$ by using a functional language with safe recursion as in
Bellantoni and Cook~\cite{Bellantoni1992}.

Our system is called $\oursystem$, which stands for \emph{Imperative
  Static Analyser for Probabilistic Polynomial Time}. It works on a
prototype of imperative programming language based on the \textsc{Loop}
language~ \cite{Meyer1967}. 
The main purpose of this paper is to present a minimal
probabilistic polytime certifier for imperative programming
languages. 

Following ideas from~\cite{Jones2009a, BenAmram2008} we ``type''
commands with matrices, while we do not type expressions since they have constant size.
The underlying idea
is that these matrices express a series of polynomials bounding
the size of stacks, with respect to their input size. The algebra on
which these matrices and vectors are based is a finite (more or less
tropical) semi-ring.
%
%
\section{Stacks machines}
We study \emph{stacks machines}, a generalisation of the classical
counters machines. Informally, a stacks machine work with \emph{letters}
belonging to a finite alphabet and \emph{stacks} of letters. Letters can
be manipulated with \emph{operators}. Typical alphabet include the
binary alphabet $\{0, 1\}$ or the set \texttt{long int} of 64 bits
integers. On the later, typical operators are $+$ or $*$.

Each machine has a finite number of registers that may hold letters and
a finite number of stacks that may hold stacks. Tests can be made
either on registers and letters (with boolean operators) or to check
whether a given stack holds the empty stack. There are only bounded
(\texttt{for}) loops which are controlled by the size of a given
stack. That is, it is more alike a \texttt{foreach} (element in the
stack) loop. Since there are only bounded loops (and no \texttt{while}),
this \emph{de facto} limits the language to primitive recursive
functions. In this way, stack machines are a generalisation of the
classical \textsc{Loop} language~\cite{Meyer1967}. 
Since our analysis is compositional, 
we add also functions to the language; their certificates can be computed
separately and plugged in the right place when a call is performed.
\subsection{Syntax and Semantics}
We denote $\termstack{}$ the empty stack and
$\termstack{\smany{\letone}{n}{}}$ the stack with $n$ elements and
$\letone_1$ at top.

\begin{definition}\label{def:gramdefinition}
  A \emph{stacks machine} consists in:
  \begin{itemize}
  \item a finite alphabet $\Alphabet = \{\many{\letone}{n}\}$ containing
    at least two values $\true$ and $\false$;
  \item a finite set of operators, $\opone_i$, of type $\Alphabet^n \to
    \Alphabet$, containing at least a 0-ary operator $\rand$, operators
    whose co-domain is $\{\true, \false\}$ are \emph{predicates} noted
    $\boolop$;
  \item a finite set of registers $\regone$ and
    stacks, $\stackone_j$ (the empty stack is noted $\termstack{}$);
  \item and a program written in the following syntax:
\begin{align*}
  \boolexpr \in \Booleans \defeq & \true \,|\, \false \,|\,
  \boolop(\many{\exprzero}{n}) \,|\, \rand() \,|\, \isempty{\stackone}\\
  \exprzero \in \Expressions \defeq & \constone \,|\, \regone \,|\,
  \opone(\many{\exprzero}{n}) \,|\, \htop\stackone \\
  %
  %
  \comzero \in \Commands \defeq & \iskip \,|\,
  \iassign{\regone}{\exprzero} \,|\,
  \iassign{\stackone_1}{\stackone_2}
  \,|\, \iassign{\stackone}{\termstack{\smany{\constone}{n}{}}} \,|\,
   \iassign{\stackone_k}{\call{\funone,\smany{\stackone}{n}{}}}\\
  &  \,|\, \pop{\stackone}\,|\, \push{\exprzero,\stackone} \,|\, \comzero;\comzero \,|\,
  \iifthenelse{\boolexpr}{\comzero}{\comzero} \,|\,
  \iloop{\stackone}{\comzero}\\
  \funone \in \Functions \defeq & \deffunction{f}{\smany{\stackone}{n}{}}{\comzero}{\stackone_j}
\end{align*}
\end{itemize}
\end{definition}

 Note that registers may not appear directly in booleans expressions to
 avoid dealing with the way non-booleans values are interpreted as
 booleans. 
 However, it is easy to define a unary predicate which,
 \emph{e.g.} sends $\true$ to $\true$ and every other letter to $\false$
 to explicitly handle this.

 Expressions always return letters (content of registers) while commands
 modify the state but do not return any value. $\htop\ $ does not destruct
 the stack but simply returns its top element while $\pop\ $ remove the top
 element from the stack but does not return anything. It is also possible
 to assign constant stack to a stack.

The $\isempty$ predicate returns $\true$ if and only if the stack
given in argument holds the empty stack and $\false$ otherwise.  The
$\iloop{\stackone}{\comzero}$ commands executes $\comzero$ as many time
as the size of $\stackone$. 
Moreover, $\stackone$ may not appear in
 $\comzero$. It is, however, possible to make a copy beforehand if the
 content is needed within the loop. Finally, we give the possibility to have function call.
 The command $\call{\funone,\smany{\stackone}{n}{}}$ call the function $\funone$ passing the actual arguments $\smany{\stackone}{n}{}$
 and finally return the result stored in the stack $\stackone_j$.
 

\subsection{Complexity}
The set of operators is not specified and may vary from one stacks
machine to another (together with the alphabet). This allows for a wide
variety of settings parametrised by these.  Typical alphabets are the
binary one ($\{\true, \false\}$), together with classical boolean
operators (\texttt{not}, \texttt{and},~\ldots); or the set \texttt{long
  int} of 64 bits integers with a large number of operators such as
\texttt{+}, \texttt{*}, \texttt{<},~\ldots Since there is only a finite
number of letters and operators all have the alphabet as domain and
co-domain, there is only a finite number of operators at each arity. So,
without going deep into details, it makes sense to consider that each
operator take a constant time to be computed.
 More precisely, each operator can be computed within a time
 bounded by a constant. Typically, on \texttt{long int}, \texttt{+} can
 be computed in 64 elementary (binary) additions and \texttt{*} takes a
 bit more operations but is still done in bounded time.
 
 Thus, in order to simplify the study, we consider that operators are
 computed in constant time and we do not need to take individual
 operators into account when bounding complexity. It is sufficient to
 consider the number of operators.
 
 The only thing that is unbounded is the size (length) of stacks. Thus,
 if one want, \emph{e.g.} to handle large integers (larger than the size
 of the alphabet), one has to encode them within stacks. The most obvious
 ways being the unary representation (a number is represented by the size
 of a stack) and the binary one (a stack of \texttt{0} and \texttt{1} is
 interpreted as a binary number with least significant bit on
 top). Obviously, any other base can be use. In each case, addition (and
 multiplication) has to be defined for this representation of ``large
 integers'' with the tools given by the language (loops). Of course,
 encoding unbounded value is crucial in order to simulate arbitrary
 Turing Machines (or even simply \textsc{Ptime} ones) and is thus
 required for the completeness part of the result.
 
 Note that copying a whole stack as a single instruction is a bit
 unrealistic as it would rather takes time proportional to the size of
 the stack. However, since each stack will individually be bounded in
 size by a polynomial, this does not hampers the polynomiality of the
 program. A clever implementation of stacks with pointers (\emph{i.e.} as
 lists) will also allow copy of a whole stack to be implemented as copy
 of a single pointer, an easy operation.

 Since the language only provides bounded loops whose number of execution
can be (dynamically) known before executing them, only primitive
recursive functions may be computed. This may look like a big
restriction but actually is quite common within classical ICC results on
\textsc{Ptime}. Notably, Cobham~\cite{Cob65} or Bellantoni and
Cook~\cite{Bellantoni1992} both work on restrictions of the primitive
recursion scheme; Bonfante, Marion and Moyen~\cite{BMMTCS} split the
size analysis (quasi-interpretation) from a termination analysis
(termination ordering) which also characterise only primitive recursive
programs; and lastly Jones and Kristiansen~\cite{Jones2009a}, on which
this work is directly based, use the \textsc{Loop} language which also
allows only primitive recursion.
 
 Since loops are bounded by the size of stacks, it is sufficient to bound
 the size of stacks in order to bound the time complexity of the
 program. Indeed, if each stack has a size smaller than $\polyone$ and the
 program has never more than $k$ nested loops, then its runtime cannot be
 larger than $\polyone^k$. Similarly, in the original $mwp$ calculus of Jones
 and Kristiansen, it was sufficient to bound the value of stacks in
 order to bound the runtime of programs (for the same reasons). Note that
 to have a large number of iterations, one first has to create a stack of
 large size, that is when bounding the number of iterations stacks are
 considered \emph{de facto} as unary numbers.
 
For each stack, we keep the dependencies it
has from the other stacks. For example, after a copy
($\iassign{\stackone_1}{\stackone_2}$), the size of $\stackone_1$ is the
same as the size of $\stackone_2$. Keeping precise dependencies is not
manageable, so we only keep the \emph{shape} of the dependence
(\emph{e.g.} the degree with which it appear in a polynomial). These
shapes are collected in a vector (for each stack) and combining all of
them gives a matrix certificate expressing the size of the output
stacks relatively to the size of the input stacks.
The matrix calculus we obtain for the certificates is
compositional. This allows for a modular approach of building
certificates.

\section{Algebra}
 
Before going deeply in explaining our system, we need to present the
algebra on which it is based.  $\oursystem$ is based on a finite algebra
of values. The set of scalars is $\values=\{\nodep , \identity , \added
, \multiplied\}$ and these are ordered in the following way $\nodep <
\identity < \added < \multiplied$. The idea behind these elements is to
express how the value of stacks influences the result of an
expression. $\nodep$ expresses no-dependency between stack and
result; $\identity$ (stands for ``Linear'') expresses that the result
linearly depends with coefficient $1$ from this stack.  $\added$
(stands for ``Additive'') expresses the idea of generic affine
dependency.  $\multiplied$ (stands for ``Multiplicative'') expresses the
idea of generic polynomial dependency.
 
 
We define sum, multiplication and union in our algebra as expressed in
Table~\ref{tb:multadddef}. The reader will immediately notice that
$\identity + \identity$ gives $\added$, while $\identity \cup \identity$
gives $\identity$ The operator $\cup$ works as a maximum.
\begin{table}
  \begin{center}
    \begin{tabular}{ c | c | c | c | c  }
      $\times$ &  \nodep & \identity & \added & \multiplied \\ \hline
      \nodep & \nodep & \nodep & \nodep & \nodep \\ \hline
      \identity& \nodep & \identity & \added & \multiplied \\ \hline
      \added & \nodep & \added &\added &\multiplied  \\ \hline\
      \multiplied & \nodep &\multiplied &\multiplied &\multiplied \\
    \end{tabular}
    \hspace{5pt}
    \begin{tabular}{ c | c | c | c | c  }
      \plus &  \nodep & \identity & \added & \multiplied \\ \hline
      \nodep & \nodep & \identity & \added & \multiplied \\ \hline
      \identity&\identity &\added &\added &\multiplied \\ \hline
      \added & \added & \added &\added &\multiplied  \\ \hline\
      \multiplied & \multiplied &\multiplied &\multiplied &\multiplied \\
    \end{tabular}
    \hspace{5pt}
    \begin{tabular}{ c | c | c | c | c  }
      $\cup$ &  \nodep & \identity & \added & \multiplied \\ \hline
      \nodep & \nodep & \identity & \added & \multiplied \\ \hline
      \identity&\identity &\identity &\added &\multiplied \\ \hline
      \added & \added & \added &\added &\multiplied  \\ \hline\
      \multiplied & \multiplied &\multiplied &\multiplied &\multiplied \\
    \end{tabular}
    \caption{Multiplication, addition and union of values}
    \label{tb:multadddef}
  \end{center}
\end{table}
Over this semi-ring we create a module of matrices, where values are
elements of $\values$. We define a partial order $\le$ between matrices
of the same size as component wise ordering. Particular matrices are
$\matrixzero$, the one filled with all $\nodep$, and $\matrixID$, the
identity matrix, where elements of the main diagonal are $\identity$ and
all the others are $\nodep$. If $\valuezero\in\values$, a particular vector is
$\mathbf{\vectorzero}^\valuezero_\indexone$ that is a column vector full of zeros and having
$\valuezero$ at $\indexone$-th row.
Multiplication and addition between matrices work as usual\footnote{That
  is: $(\matrixone + \matrixtwo)_{\indexone, \indextwo} =
  \matrixone_{\indexone, \indextwo} + \matrixtwo_{\indexone, \indextwo}$
  and $(\matrixone \times \matrixtwo)_{\indexone, \indextwo} = \sum
  \matrixone_{\indexone, \indexthree} \times \matrixtwo_{\indexthree,
    \indextwo}$ } and we define point-wise union between matrices:
$(\matrixone \cup \matrixtwo)_{\indexone, \indextwo} =
\matrixone_{\indexone, \indextwo} \cup \matrixtwo_{\indexone,
  \indextwo}$. Notice that $\matrixone \cup \matrixtwo \leq \matrixone +
\matrixtwo$. As usual, multiplication between a value and a matrix
corresponds to multiplying every element of the matrix by that value.
 
%
We can now move on and present some new operators and properties of
matrices. Given a column vector $\vectorzero$ of dimension
$\dimensionone$, a matrix $\matrixone$ of dimension $\dimensionone
\times \dimensiontwo$ an index $\indexone$ ($\indexone \le
\dimensiontwo$), we indicate with
$\matrixsubstitution{\matrixone}{\indexone}{\vectorzero}$ a substitution
of the $\indexone$-th column of the matrix $\matrixone$ with the vector
$\vectorzero$.
 
Next, we need a closure operator. The ``union closure'' is the union of
all powers of the matrix: $\closure{\matrixone} = \bigcup_{\indexone\ge
  0} \matrixone^\indexone$. It is always defined because the set of
possible matrices is finite.
We will need also a ``merge down'' operator. Its use is to propagate
the influence of some stacks to some other and it is used to correctly
detect the influence of stacks controlling loops onto stacks modified
within the loop (hence, we can also call it ``loop correction''). The
last row and column of the matrix is treated differently because it will
be use to handle constants and not stacks. In the following, $\dimensionone$ is size of the vector, $k<n$ and $j<n$.
%
%
\begin{itemize}
\item 
    $(\mergedown{\vectorzero}{\indexthree,\dimensionone})_\indexone = \vectorzero_\indexone$ 
\item 
   $(\mergedown{\vectorzero}{\indexthree, \column})_{\indexthree} =
   \begin{cases}
     \multiplied & \text{ if $\exists \indexfour < \dimensionone,
       \indexfour \neq \indexthree$ such that $\vectorzero_{\indexfour}  
       \neq \nodep$} \\
     \nodep & \text{ otherwise and $\vectorzero_{\dimensionone} =
       \nodep$}  \\
     \identity & \text{ otherwise and $\vectorzero_{\dimensionone} =
       \identity$} \\
     \added & \text{ otherwise and $\vectorzero_{\dimensionone} \ge
       \added$} \\
     \vectorzero_{\indexthree} & \text{ otherwise}\\
   \end{cases}$
 \item $ 
    (\mergedown{\vectorzero}{\indexthree, \column})_{\row} =
    \begin{cases}
      \nodep &\text{ if $i=n$ } \\
      \multiplied & \text{ if $i\neq j$, $\vectorzero_{\row} \neq \nodep$ and
        $\vectorzero_{\column} \neq \nodep$}\\
      \vectorzero_{\row} & \text{ otherwise}
    \end{cases}$ 
 \end{itemize}
 
 In the following we will use a slightly different notation. Given a matrix $\matrixone$ and an index $\indexthree$, 
 $\mergedown{\matrixone}{\indexthree}$ is the matrix obtained by applying the previous definition of merge down on 
 each column of $\matrixone$.
  Formally, if $\vectorzero$ is the $\indextwo$-th column of $\matrixone$, then $\indextwo$-th column of
 $\mergedown{\matrixone}{\indexthree}$ is $\mergedown{\vectorzero}{\indexthree,\indextwo}$.
 
 Finally, the last operator that we are going to introduce is the ``re-ordering'' operator. Given a vector $\vectorzero$ we write
 $\vectorasjust{\vectorzero}{1\rightarrow \indexone,\ldots, n\rightarrow \indextwo}$
 to indicate that the result is a vector whose rows are permuted. The first raw goes in the $\indexone$-th position and so on till the $n$-th to the $\indextwo$-th. In order to use a short notation, if a row is flowing in its same position, then we don't explicit it.
 Formally, if $\vectortwo =  \vectorasjust{\vectorzero}{1\rightarrow \indexone,\ldots, n\rightarrow \indextwo}$, then:
 $
 \vectortwo_\indexfour = \sum_\indexthree \vectorzero_\indexthree \,|\, \indexthree\rightarrow\indexfour
 $.
 
 So, in case two or more rows clash on the same final row, we perform a sum between the values. This operator is used for certificate the function calls. Indeed we have to connect the formal parameters with the actual parameters. Therefore, we have to permute the result of the function in order to keep track where the actual parameters has been substituted in place of the formal parameters.
 
\section{Multipolynomials and abstraction}\label{sc:multipolynomial}
 
  We can now proceed and introduce another fundamental concept for
  $\oursystem$: multipolynomials. This concept was firstly presented in
  \cite{MMDiceMultipol}. A multipolynomial represents real bounds and its
  abstraction is a matrix. In the following, we assume that every
  polynomial have positive coefficients and it is in the canonical form.
 
  First, we need to introduce some operator working on polynomials. 
%
%
%
  
  \begin{definition}[Union of polynomial]\label{def:unionmultipolinomial}
    Be $\polynormform{\polyone}$, $\polynormform{\polytwo}$ the
    canonical form of the polynomials $\polyone$, $\polytwo$ and let $\polythree,\polyfour$ polynomials, $\alpha,\beta$ natural numbers, 
    we define the operator $\polyunion{\polyone}{\polytwo}$ over
    polynomials in the following way:
    \[
     \polyunion{\polyone}{\polytwo} =        \begin{cases}
     					      \max{(\alpha,\beta)} + \polyunion{\polythree}{\polyfour} & \text{if $\polynormform\polyone = \alpha +\polythree$ and $\polynormform\polytwo = \beta + \polyfour$.}\\
					      \max{(\alpha,\beta)}\cdot X_i + \polyunion{\polythree}{\polyfour} & \text{otherwise if $\polynormform\polyone = \alpha X_i+\polythree$ and $\polynormform\polytwo = \beta X_i + \polyfour$.}\\
					      \polynormform\polyone + \polynormform\polytwo & \text{otherwise}
                                             \end{cases}\
    \]
  \end{definition}
  
  Let's see some example. Suppose we have
  these two polynomials: ${X_1} + 2{X_2} +
  3{X_4}^2{X_5}$ and ${X_1}+ 3{X_2} +
  3{X_4}^2{X_5}+ {X_6}$. Call them, respectively
  $\polyone$ and $\polytwo$. We have that
  $\polyunion{\polyone}{\polytwo}$ is ${X_1} + 3{X_2} +
  3{X_4}^2{X_5} + {X_6}$.
 
 First we need to introduce the concept of abstraction of
 polynomial. Abstraction gives a vector representing the shape of our
 polynomial and how variables appear inside it.
 
 \begin{definition}[Abstraction of polynomial]\label{def:abstractionmultipolynomials}
   Let $\polyone(\multi{X})$ a polynomial over $\numeone$ variables,
   $\polabstr{{\polyone(\multi{X})} }$ is a column vector of size
   $\numeone+1$ such that:
  \begin{itemize}
  \item If $\polyone(\multi{X})$ is a constant $\constone > 1$, then
    $\polabstr{\polyone(\multi{X})}$ is
    $\vectorA_n$
  \item Otherwise if $\polyone(\multi{X})$ is a constant $0$ or $1$, then
    $\polabstr{\polyone(\multi{X})}$ is respectively
    $\vectorO_n$ or $\vectorID_n$.
  \item Otherwise if $\polyone(\multi{X})$ is ${X}_\indexone$, then
    $\polabstr{\polyone(\multi{X})}$ is
    $\vectorID_\indexone$.
  \item Otherwise if $\polyone(\multi{X})$ is $\alpha{X}_\indexone$ (for
    some constant $\alpha>1$), then $\polabstr{\polyone(\multi{X})}$
    is
    $\vectorA_\indexone$.
  \item Otherwise if $\polyone(\multi{X})$ is
    $\polytwo(\multi{X})+\polythree(\multi{X})$, then
    $\polabstr{\polyone(\multi{X})}$ is
    $\polabstr{\polytwo(\multi{X})}+\polabstr{\polythree(\multi{X})}$.
  \item Otherwise, $\polyone(\multi{X})$ is
    $\polytwo(\multi{X})\cdot\polythree(\multi{X})$, then
    $\polabstr{\polyone(\multi{X})}$ is
    $\multiplied\cdot\polabstr{\polytwo(\multi{{X}})} \cup
    \multiplied\cdot\polabstr{\polythree(\multi{{X}})}$.
  \end{itemize}
 \end{definition}
 
 Size of vectors is $\numeone+1$ because $\numeone$ cells are needed for
 keeping track of $\numeone$ different variables and the last cell is the
 one associated to constants.  We can now introduce multipolynomials and
 their abstraction.
 
 \begin{definition}[Multipolynomials]
   A multipolynomial is a tuple of polynomials. Formally $\multipolyone =
   (\polyone_1,\ldots,\polyone_\numeone)$, where each
   $\polyone_\indexone$ is a polynomial.
 \end{definition}
 
 In the following, in order to refere to a particular polynomial of a multipolynomial we will use an index. So, $\multipolyone_\indexone$ refers to the $\indexone$-th polynomial of $\multipolyone$.
 Now that we have introduced the definition of multipolynomials, we can go on and present two foundamental operation on them: sum and composition.
 
\begin{definition}[Sum of multipolynomials]
 Given two multipolynomials $\multipolyone$ and $\multipolytwo$ over the same set of variables, we define addition in the following way:
 $ (\multipolyone \oplus \multipolytwo)_\indexone =  \polyunion{\multipolyone_\indexone}{\multipolytwo_\indexone} $.
\end{definition}

\begin{definition}[Composition of multipolynomial]
 Given two multipolynomials $\multipolyone$ and $\multipolytwo$ over the same set of variables, the composition of two multipolynomials is defined as the composition component-wise of each polynomial. Formally we define composition in the following way:
 $(\multipolyone\cdot\multipolytwo) = \multipolytwo_1\cdot\multipolyone_1,\ldots,\multipolytwo_\numeone\cdot\multipolyone_\numeone$.
\end{definition}

 Abstracting a multipolynomial naturally gives a matrix where each column
 is the abstraction of one of the polynomials.
 
 \begin{definition}
   Let $\multipolyone$ be a
   multipolynomial, its abstraction $\polabstr{\multipolyone}$ is a
   matrix where the $\indexone$-th column is the vector
   $\polabstr{\multipolyone_\indexone}$.
 \end{definition}
 
 In the following, we use polynomials to bound size of single
 stacks. Since handling polynomials is too hard (\emph{i.e.}
 undecidable), we only keep their abstraction. Similarly, we use
 multipolynomials to bound the size of all the stacks of a program at
 once. Again, rather than handling the multipolynomials, we only work
 with their abstractions.

\section{Typing and certification}

We presented all the ingredients of $\oursystem$ and we are ready to
introduce certifying rules.  Certifying rules, in figure
\ref{def:typingRulesCommands}, associate at every command a matrix. We suppose to have $\numeone-1$ stacks.
Notice how expressions are not typed; indeed, we 
don't need to type them because their size is fixed.

\begin{figure*}[htbp]
\begin{center}
\fbox{
\begin{minipage}{.97\textwidth}\begin{scriptsize}
$$
 \AxiomC{$n>1$ }\RightLabel{\textsc{(Const-A)}}
 \UnaryInfC{$\vdash \iassign{\stackone}{\termstack{\smany{\constone}{n}{}}} : 
 \matrixsubstitution{\matrixID}{\indexone}{\vectorA_\numeone}$} \DisplayProof
 \hspace{10pt}
 \AxiomC{}\RightLabel{\textsc{(Const-L)}}
 \UnaryInfC{$\vdash \iassign{\stackone}{\termstack{{\constone_1}}} : 
 \matrixsubstitution{\matrixID}{\indexone}{\vectorID_\numeone}$} \DisplayProof
 \hspace{10pt}
 \AxiomC{}\RightLabel{\textsc{(Const-0)}}
 \UnaryInfC{$\vdash \iassign{\stackone}{\termstack{{}}} : 
 \matrixsubstitution{\matrixID}{\indexone}{\vectorO_\numeone}$} \DisplayProof
$$
$$
\AxiomC{}\RightLabel{\textsc{(Axiom-Reg)}} \UnaryInfC{$\vdash
   \iassign{\regone}{\exprzero}:\matrixID$} \DisplayProof
   \hspace{10pt}
\AxiomC{}\RightLabel{\textsc{(Push)}}
\UnaryInfC{$\vdash \push{\exprzero, \stackone_\indexone} : \matrixsubstitution{\matrixID}{\indexone}{(\vectorID_\numeone+\vectorID_\indexone)}$}
\DisplayProof
\hspace{10pt}
\AxiomC{$  \vdash \comone : \matrixone$}
\AxiomC{$  \vdash \comtwo : \matrixtwo$}\RightLabel{\textsc{(Concat)}}
\BinaryInfC{$  \vdash \comone;\comtwo : \matrixone \times \matrixtwo$}
\DisplayProof
$$
$$
\AxiomC{$
  \vdash \comone : \matrixone $} \AxiomC{$ \forall \indexone,
  {(\closure\matrixone)}_{\indexone,\indexone} < \added $}
\RightLabel{\textsc{(Loop)}} \BinaryInfC{$ \vdash
  \iloop{\bigvar_\indexthree}{\comone} :
  \mergedown{(\closure{\matrixone})}{\indexthree}$} \DisplayProof
\hspace{10pt}
\AxiomC{$\vdash \comzero: \matrixone$}
\AxiomC{$\matrixone\le \matrixtwo$}\RightLabel{\textsc{(Subtyp)}}
\BinaryInfC{$\vdash \comzero : \matrixtwo$}
\DisplayProof
 \hspace{10pt}
 \AxiomC{} \RightLabel{\textsc{(Asgn)}} 
 \UnaryInfC{$ \vdash \stackone_\indexone:=\stackone_\indextwo: \matrixsubstitution{\matrixID}{\indexone}{\vectorID_\indextwo}
   $} \DisplayProof
  $$
  $$  
\AxiomC{$\boolone\in\Booleans$}\AxiomC{$ \vdash
  \comone:\matrixone$}\AxiomC{$
  \vdash\comtwo:\matrixtwo$}\RightLabel{\textsc{(IfThen)}}
\TrinaryInfC{$ \vdash\iifthenelse{\boolone}{\comone}{\comtwo} :
  \matrixone \cup \matrixtwo$} \DisplayProof 
\hspace{10pt}
\AxiomC{$\vdash \comzero:\matrixone$}\RightLabel{\textsc{(Fun)}}
\UnaryInfC{$\deffunction{f}{\smany{\stackone}{n}{}}{\comzero}{\stackone_j}:\matrixone$}
\DisplayProof
$$
$$
 \AxiomC{}\RightLabel{\textsc{(Skip)}} \UnaryInfC{$\vdash
   \iskip:\matrixID$} \DisplayProof
 \hspace{10pt}
 \AxiomC{}\RightLabel{\textsc{(Pop)}} 
\UnaryInfC{$ \vdash \pop\stackone : \matrixID$} \DisplayProof
\hspace{10pt}
\AxiomC{$\deffunction{f}{\smany{\stackone}{n}{}}{\comzero}{\stackone_j}:\matrixone$}\RightLabel{\textsc{(FunCall)}}
\UnaryInfC{${\stackone_i}:=\call{f,\stackone_\indexthree,\ldots,\stackone_\indexfour}: \matrixsubstitution{\matrixID}{\indexthree}{ 
\vectorasjust{\matrixone_j}{(1\rightarrow\indexthree,\ldots,n\rightarrow\indexfour )}   } $}
\DisplayProof
$$
\end{scriptsize}
\end{minipage}
}
\end{center}
\caption{Typing rules for commands and functions}
\label{def:typingRulesCommands}
\end{figure*}

These matrices tell us about the behaviour of a command and functions. 
We can think about them as certificates. Certificates for commands tell us about the correlation between
input and output stacks. Each column gives the bound of one output stack
while each row corresponds to one input stack. Last row and column
handle constants.

As example, command $\textsc{(Skip)}$ tells us that no stack is
changed. Concatenation of commands $\textsc{(Concat)}$
tells us how to find a certificate for a series of commands. The
intrinsic meaning of matrix multiplication is to ``connect'' output of
the first certificate with input of the second. In this way we rewrite
outputs of the second certificate respect to inputs of the first one.
Notice how the rule for \textsc{(Push)} does not have any hypothesis.
Indeed, this command just increase by $+1$ (a constant) the size of the stack $\stackone_\indexone$.
When there is a test, taking the union (\emph{i.e.} maximum) of the
certificates means taking the worst possible case between the two
branches.
The most interesting type rule is the one concerning the
$\textsc{(Loop)}$ command. The right premise acts as a guard: an
$\added$ on the diagonal means that there is a stack
$\stackone_\indexone$ such that iterating the loop a certain number of
time results in (the size of) $\stackone_\indexone$ depending affinely
of itself, \emph{e.g.} $\size{\stackone_\indexone} = 2 \times
\size{\stackone_\indexone}$. Obviously, iterating this loop may create an
exponential growth, so we stop the analysis immediately. Next, the union
closure used as a certificate corresponds to a worst case scenario. We
can't know if the loop will be executed 0, 1, 2, \ldots times each
corresponding to certificates $\matrixone^0, \matrixone^1, \matrixone^2,
\ldots$ Thus we assume the worst and take the union of these, that is
the union closure.  Finally, the loop correction (merge down) is here to
take into account the fact that the result will also depends on the size
of the stack controlling the loop (\emph{i.e.} the index $k$ 
is the number of the variable $\bigvar_k$ controlling the
loop).

Before start to prove the main theorems, let present some examples using the commands $\call{}$, $\iloop{}{}$.
In the following we will use integer number like $0,1,2,\ldots$ intending a constant list of size $0,1,2,\ldots$.
This should help the reader.

\begin{example}[Addition]\label{def:addition}
We are going to present the function $+$ (a shortcut for the following function). 
We can check that the analysis of this function is exactly the one expected. 
The size of the result is the sum of the sizes of the two stacks.

\begin{algorithm}
\begin{multicols}{2}
\begin{algorithmic}
\FUNOPEN{addition}{$S_1,S_2$}
  \State $\iassign{S_3}{S_2}$
  \LOOP{$S_2$}
    \State \push{$ \htop{S_3},S_1$}
    \State \pop{$S_3$}
  \ENDLOOP
\FUNCLOSE{$S_1$}
\end{algorithmic}

The associate matrix of this function is exactly what we are expecting. Indeed, the matrix is the following one: 
$\begin{bmatrix}
\identity & \nodep    &  \nodep    &  \nodep \\ 
\identity & \identity &  \identity &  \nodep \\ 
\nodep    & \nodep    &  \nodep    &  \nodep \\ 
\nodep    & \nodep    &  \nodep    &  \identity \\ 
\end{bmatrix}$
\end{multicols}
\end{algorithm}
\end{example}

\begin{example}[Multiplication]\label{def:multiplicationconstant}
 In the following we present a way to type multiplication between a number and a variable. In the following $S_2$ is multiplied by $n$
 and the result is stored in $S_1$.
 
\begin{algorithm}
\begin{multicols}{2}
 \begin{algorithmic}
  \State $\iassign{S_1}{0}$
  \LOOP{$S_2$}
    \State $S_1 = S_1 + n$
  \ENDLOOP
 \end{algorithmic}
  \vfill\columnbreak\vfill
  typed with
  $\begin{bmatrix}
    \nodep    & \nodep    & \nodep \\
    \added    & \identity & \nodep \\
    \nodep    & \nodep    & \identity \\
   \end{bmatrix}$
  
\end{multicols}
\end{algorithm}
\end{example}

\begin{example}[Multiplication]\label{def:multiplication}
In this example we show how to type a multiplication between two variables.
\begin{algorithm}
 \begin{multicols}{2}
  \begin{algorithmic}
  \FUNOPEN{multiplication}{$S_1,S_2$}
    \State $\iassign{S_3}{0}$
    \LOOP{$S_2$}
      \State $\iassign{S_3}{S_1+S_3}$
    \ENDLOOP
  \FUNCLOSE{$S_1$}  \end{algorithmic}
  \vfil\columnbreak\vfill
The loop is typed with the matrix 
$\begin{bmatrix}
  \identity & \nodep & \multiplied & \nodep \\
  \nodep    & \identity & \multiplied & \nodep \\
  \nodep   & \nodep  & \identity & \nodep  \\
  \nodep & \nodep   & \nodep  & \identity  \\
 \end{bmatrix}$.
 
So, the entire function is typed with 
$\begin{bmatrix}
  \identity & \nodep & \multiplied & \nodep \\
  \nodep    & \identity & \multiplied & \nodep \\
  \nodep   & \nodep  & \nodep & \nodep  \\
  \nodep & \nodep   & \nodep  & \identity  \\
 \end{bmatrix}$,
as is was expected. 
\end{multicols}
\end{algorithm}
\end{example}

\begin{example}[Subtraction]\label{def:subtraction}
In this example we show how to type the subtraction between two variables.
\begin{algorithm}
 \begin{multicols}{2}
  \begin{algorithmic}
  \FUNOPEN{subtraction}{$S_1,S_2$}
    \LOOP{$S_2$}
      \State $\pop{S_1}$
    \ENDLOOP
  \FUNCLOSE{$S_1$}  \end{algorithmic}
  \vfil\columnbreak\vfill
  The function is typed with the identity matrix $\matrixID$, since the $\pop{}$ command is typed with the identity.
\end{multicols}
\end{algorithm}
\end{example}

\section{Semantics}

Semantics of the programs generated by the grammar in def \ref{def:gramdefinition} is the usual and expected one.
In the following we are using $\statefunc$ as the state function
associating to each variable a stack and to each register a letter.
Semantics for boolean value is labelled with probability, while semantics for expressions ($\artred$) is not carrying anything.
In figure \ref{def:semantics2} is shown the semantic
for booleans and expressions.
Most of boolean operator have probability $1$ and
operator rand reduced to $\true$ or $\false$ with probability $\frac{1}{2}$. 
Notice how there is no semantic associated to operators $op?()$ and $op()$. Of course, their semantics 
depends on how they will be implemented.

Since semantics for boolean is labelled with probability,
also semantics of commands ($\comred{\probone}$) is labelled with a probability, 
It tells us the probability to reach a particularly final state after having execute a command from a initial state.

\begin{figure*}[htbp]
\begin{center}
\fbox{
\begin{minipage}{.95\textwidth}
$$
\AxiomC{if $\statefunc(S) = \termstack{}$}
\UnaryInfC{$\pair{S}{\statefunc} \bolred{1} \true$}\DisplayProof
\hspace{5pt}
\AxiomC{if $\statefunc(S) != \termstack{}$}
\UnaryInfC{$\pair{S}{\statefunc} \bolred{1} \false$}\DisplayProof
\hspace{5pt}
\AxiomC{}
\UnaryInfC{$\pair{\rand}{\statefunc} \bolred{1/2} \true$}\DisplayProof
\hspace{5pt}
\AxiomC{}
\UnaryInfC{$\pair{\rand}{\statefunc} \bolred{1/2} \false$}\DisplayProof
$$
$$
\AxiomC{}
\UnaryInfC{$\pair{\regone}{\statefunc} \artred \statefunc(\regone)$}\DisplayProof
\hspace{5pt}
\AxiomC{if $\statefunc(S) = \termstack{c_1\ldots c_n}$}
\UnaryInfC{$\pair{\htop{S}}{\statefunc} \artred c_1$}\DisplayProof
$$
\end{minipage}
}
\end{center}
\caption{Semantics of booleans and expressions}
\label{def:semantics2}
\end{figure*}

In figure \ref{def:semantics} are presented the semantics for commands.
Since a compile time all the functions definitions can be collected, we suppose that exists a set of 
defined function called $\definedFunctions$ where all the functions defined belong.

\begin{figure*}[htbp]
\begin{center}
\fbox{
\begin{minipage}{.95\textwidth}\begin{scriptsize}
$$
\AxiomC{}\UnaryInfC{$\pair{\iskip}{\statefunc} \comred 1 \statefunc$}\DisplayProof
\hspace{5pt}
\AxiomC{$\statefunc(S)=\termstack{<c_1,c_2,\ldots,c_n>}$}
\UnaryInfC{$\pair{\pop{S}}{\statefunc} \comred{1} \subst{\statefunc}{S}{\termstack{c_2,\ldots,c_n}}$}\DisplayProof
\hspace{5pt}
\AxiomC{$\statefunc(S)=\termstack{<c_1,\ldots,c_n>}$}
\UnaryInfC{$\pair{\push{e,S}}{\statefunc} \comred{1} \subst{\statefunc}{S}{\termstack{e,c_1,\ldots,c_n}}$}\DisplayProof
$$
$$
\AxiomC{}\UnaryInfC{$\pair{\iassign{\regone}{\exprone}}{\statefunc} \comred{1} \subst{\statefunc}{\regone}{\exprone}$}\DisplayProof
\hspace{5pt}
\AxiomC{}\UnaryInfC{$\pair{\iassign{S_1}{S_2}}{\statefunc} \comred{1} \subst{\statefunc}{S_1}{\statefunc(S_2)}$}\DisplayProof
\hspace{5pt}
\AxiomC{}\UnaryInfC{$\pair{\iassign{S_1}{\termstack{c_1,\ldots,c_n}}}{\statefunc} \comred{1} \subst{\statefunc}{S_1}{\termstack{c_1,\ldots,c_n}} $}\DisplayProof
$$
$$
\AxiomC{$\deffunction{myfun}{S_1,\ldots,S_n}{C}{S_m} \in \definedFunctions$}
\AxiomC{$\pair{C}{\subst{\statefunc}{S_1,\ldots,S_n}{\multi S}} \comred{\probone} \statefuncone$}
\BinaryInfC{$\pair{\iassign{S_p}{ \call{myfun, \multi S}  }}{\statefunc} \comred{\probone} \statefuncone$}
\DisplayProof
$$
$$
\AxiomC{$\pair{\comone}{\statefunc_1}\comred\probone \statefunc_2$}
\AxiomC{$\pair{\comtwo}{\statefunc_2}\comred\probtwo \statefunc_3$}
\BinaryInfC{$\pair{\comone;\comtwo}{\statefunc} \comred{\probone\probtwo} \statefunc_3$}
\DisplayProof
\hspace{5pt}
\AxiomC{$\pair{\boolone}{\statefunc} \bolred\probone \true$}\AxiomC{$\pair{\comone}{\statefunc}\comred\probtwo \statefunc_1$}
\BinaryInfC{$\pair{\iifthenelse{\boolone}{\comone}{\comtwo}}{\statefunc} \comred{\probone\probtwo} \statefunc_1$}
\DisplayProof
$$
$$
\AxiomC{$\pair{\bigvar_\indexthree}{\statefunc} \artred \termlist{}$}
\UnaryInfC{$\pair{\iloop{\bigvar_\indexthree}{\comone}}{\statefunc}
  \comred 1 \statefunc$} \DisplayProof 
  \hspace{10pt}
\AxiomC{$\pair{\boolone}{\statefunc} \bolred\probone \false$}\AxiomC{$\pair{\comtwo}{\statefunc}\comred\probtwo \statefunc_1$}
\BinaryInfC{$\pair{\iifthenelse{\boolone}{\comone}{\comtwo}}{\statefunc} \comred{\probone\probtwo} \statefunc_1$}
\DisplayProof
$$
$$
\AxiomC{$\pair{\bigvar_\indexthree}{\statefunc} \artred \termstack{\smany{\constone}{n}{}}$}
\AxiomC{$\pair{\comone}{\statefunc} \comred{\probone_1} \statefunc_1$}
\noLine\UnaryInfC{$\ldots$}
\noLine\UnaryInfC{$\pair{\comone}{\statefunc_{\numeone-1}}
  \comred{\probone_\numeone} \statefunc_{\numeone}$} 
  \BinaryInfC{$ \pair{\iloop{\bigvar_\indexthree}{\comone}}{\statefunc}
  \comred{\Pi\probone_\indexone} \statefunc_{\numeone }$} \DisplayProof
$$
\end{scriptsize}
\end{minipage}
}
\end{center}
\caption{semantics of commands}
\label{def:semantics}
\end{figure*}

Since $\oursystem$ is working on stochastic computations, in order to reach
soundness and completeness respect to $\PP$, we need to define a
semantics for distribution of final states.  We need to introduce some
more definitions. Let $\distrone$ be a distribution of probabilities
over states. Formally, $\distrone$ is a function whose type is
$(\variable\rightarrow\values)\rightarrow\probone$. Sometimes we will
use the following notation $ \distrone = \{\statefunc_1^{\probone_1},
\ldots, \statefunc_\numeone^{\probone_\numeone} \}$ indicating that
probability of $\statefunc_\indexone$ is $\probone_\indexone$.

We can so define semantics for distribution; the most important rules are 
shown in Figure~\ref{def:semanticdistributionrules}. Since semantics for some commands computes with probability equal to $1$,
the correspondent rule for distributions is not presented.
Unions of distributions and multiplication between real number
and a distribution have the natural meaning.
Notice also how all the final distributions are normalized distributions.

\begin{figure}[htbp]
\begin{center}
\fbox{
\begin{minipage}{.95\textwidth}\scriptsize
$$
\AxiomC{$\pair{\comone}{\statefunc}\comdistr \distrone$}
\AxiomC{$\forall \statefunc_\indexone \in
  \distrone. \pair{\comtwo}{\statefunc_\indexone}\comdistr
  \distrtwo_\indexone$} \BinaryInfC{$\pair{\comone;\comtwo}{\statefunc}
  \comdistr
  \bigcup_\indexone\distrone(\statefunc_\indexone)\cdot\distrtwo_\indexone$}
\DisplayProof
$$
$$
\AxiomC{$\pair{\bigvar_\indexthree}{\sigma} \artred 0$}
\UnaryInfC{$\pair{\iloop{\bigvar_\indexthree}{\comzero}}{\sigma}
  \comdistr \{\sigma^1\}$} \DisplayProof 
  \hspace{10pt}
\AxiomC{$\pair{\bigvar_\indexthree}{\sigma} \artred \termstack{\smany{\constone}{n}{}}$}
\AxiomC{$\pair{\,\overbrace{\comzero;\comzero;\ldots;\comzero}^{\numeone}}{\sigma}
  \comdistr \distrtwo$} 
  \BinaryInfC{$\pair{\iloop{\bigvar_\indexthree}{\comzero}}{\statefunc} \comdistr
  \distrtwo$} \DisplayProof
$$
$$
\AxiomC{$\pair{\boolzero}{\sigma} \bolred\probone
  \true$}\AxiomC{$\pair{\comone}{\sigma}\comdistr \distrone$}
\AxiomC{$\pair{\comtwo}{\sigma}\comdistr \distrtwo$}
\TrinaryInfC{$\pair{\iifthenelse{\boolzero}{\comone}{\comtwo}}{\sigma}
  \comdistr (\probone\cdot\distrone) \cup ((1-\probone)\cdot\distrtwo)$}
\DisplayProof
$$
\end{minipage}
}
\end{center}
\caption{Distributions of output states}\label{def:semanticdistributionrules}
\end{figure}

Here we can present our first result. 
\begin{theorem}
  A command $\comzero$ in a state $\statefuncone$ reduce to another
  state $\statefunctwo$ with probability equal to
  $\distrone(\statefunctwo)$, where $\distrone$ is the distribution of
  probabilities over states such that $\pair{\comzero}{\statefuncone}
  \comdistr \distrone$.
\end{theorem}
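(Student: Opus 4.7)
The plan is to prove the claim by structural induction on the command $\comzero$, showing that for fixed $\statefuncone,\statefunctwo$ we have
$$\sum\bigl\{\probone\,:\,\pair{\comzero}{\statefuncone}\comred{\probone}\statefunctwo\bigr\}=\distrone(\statefunctwo),$$
where the sum ranges over all derivations in the command semantics and $\pair{\comzero}{\statefuncone}\comdistr\distrone$ is the (unique) derivation in the distribution semantics. The key observation is that the only source of probabilistic branching is the $\rand()$ operator, which only enters a computation through boolean guards of $\iifthenelse{}{}{}$; everything else is deterministic with probability $1$.

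The base cases $\iskip$, $\pop{\stackone}$, $\push{\exprzero,\stackone}$, $\iassign{\regone}{\exprzero}$, $\iassign{S_1}{S_2}$, and $\iassign{S_1}{\termstack{\smany{\constone}{n}{}}}$ are immediate: each has a unique derivation with probability $1$ in both systems, reaching the same successor state. For a function call, both semantics reduce to the execution of the body on the substituted state, so the case reduces to the inductive hypothesis applied to the body.

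The two nontrivial inductive cases are sequencing and conditional. For $\comone;\comtwo$, every command-semantics derivation reaching $\statefunctwo$ factors uniquely through some intermediate $\statefunc'$ with contribution $\probone'\cdot\probone''$; grouping derivations by $\statefunc'$ and applying the induction hypothesis to $\comone$ and to each instance of $\comtwo$ shows the total probability equals $\sum_{\statefunc'}\distrone_1(\statefunc')\cdot\distrtwo_{\statefunc'}(\statefunctwo)$, which is exactly the coefficient of $\statefunctwo$ in $\bigcup_{\statefunc'}\distrone_1(\statefunc')\cdot\distrtwo_{\statefunc'}$. For $\iifthenelse{\boolone}{\comone}{\comtwo}$, the boolean reduces to $\true$ with some probability $\probone$ (either $1$ for deterministic guards or $\tfrac12$ for $\rand()$) and to $\false$ with probability $1-\probone$; grouping derivations by the branch taken and applying the inductive hypothesis to each branch yields precisely $\probone\cdot\distrone(\statefunctwo)+(1-\probone)\cdot\distrtwo(\statefunctwo)$. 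Loops reduce to an $n$-fold sequential composition, where $n$ is the (deterministic) size of $\statefunc(\bigvar_\indexthree)$; induction on $n$ together with the sequence case closes this case.

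The main obstacle is the bookkeeping in the sequence case, since many command-semantics derivations may reach the same final state via different intermediates, and one must show that summing their products of probabilities exactly reproduces the coefficient obtained by the $\bigcup\distrone(\statefunc_\indexone)\cdot\distrtwo_\indexone$ construction. This is essentially an exchange-of-summation argument that relies on the intermediate state being uniquely determined by a prefix derivation, so that the factorisation $\probone'\cdot\probone''$ matches one-to-one with the pair (outcome of $\comone$, continuation of $\comtwo$) used by the distribution semantics. Once this correspondence is set up, the rest is routine.
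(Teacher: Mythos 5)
Your proposal is correct and follows essentially the same route as the paper: the paper's proof is a one-line appeal to structural induction on the derivation tree, noting precisely the point you make explicit, namely that the claimed probability is the total over \emph{all} command-semantics derivations from $\pair{\comzero}{\statefuncone}$ to $\statefunctwo$, which the distribution rules aggregate. Your write-up merely fills in the bookkeeping (factorisation through intermediate states for sequencing, branch-splitting for conditionals, unfolding loops into $n$-fold composition) that the paper leaves implicit.
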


Proof is done by structural induction on derivation tree. It is quite
easy to check that this property holds, as the rules in
Figure~\ref{def:semanticdistributionrules} are showing us exactly this
statement. The reader should also not be surprised by this
property. Indeed, we are not considering just one possible derivation
from $\pair{\comone}{\statefuncone}$ to $\statefunctwo$, but all the
ones going from the first to the latter.

\section{Soundness}

The language recognised by $\oursystem$ is an imperative language where
the iteration schemata is restricted and the size of objects (here,
stacks) is bounded. These are ingredients of a lot of well known ICC
polytime systems. There is no surprise that every program certified by
$\oursystem$ runs in probabilistic polytime. 

Now we can start to present theorems and lemmas of our system. First
 we will focus on multipolynomial properties in order to show that the
 behaviour of these algebraic constructor is similar to the behaviour of
 matrices in our system. Finally we will link these things together to
 get polytime bound for $\oursystem$.  Here are two fundamental lemmas. Their proofs are straightforward.
 
 \begin{lemma}\label{lm:polysum}
 Let $\polyone$ and $\polytwo$ two positive polynomials, then it holds that $\polabstr{\polyone\oplus\polytwo} 
 = \polabstr{\polyone} \cup \polabstr{\polytwo}$.
 \end{lemma}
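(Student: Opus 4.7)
The plan is to prove the identity by induction on the joint canonical structure of $\polyone$ and $\polytwo$, following the three cases of Definition~\ref{def:unionmultipolinomial}. The core observation is that both sides are pointwise maxima performed at different levels of granularity: $\oplus$ takes the numeric maximum of coefficients of matching monomials (and an ordinary sum on non-matching ones), whereas $\cup$ on abstractions takes the componentwise maximum in the algebra $\{\nodep, \identity, \added, \multiplied\}$ (by inspection of Table~\ref{tb:multadddef}). Since the abstraction of a scalar coefficient $\alpha$ only distinguishes $\alpha = 0$, $\alpha = 1$, $\alpha > 1$, mapping them to $\nodep$, $\identity$, $\added$ respectively, one immediately has $\polabstr{\max(\alpha,\beta)} = \polabstr{\alpha} \cup \polabstr{\beta}$, and analogously for monomials of the form $\alpha X_i$. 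The base cases (constants, variables, single monomials $\alpha X_i$) are then immediate from Definition~\ref{def:abstractionmultipolynomials}.

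In the first case, $\polynormform\polyone = \alpha + \polythree$ and $\polynormform\polytwo = \beta + \polyfour$, the identity
\[
\polabstr{\polyone \oplus \polytwo} = \polabstr{\max(\alpha,\beta)} + \polabstr{\polythree \oplus \polyfour}
\]
follows directly from Definitions~\ref{def:unionmultipolinomial} and~\ref{def:abstractionmultipolynomials}. Applying the induction hypothesis to $\polythree \oplus \polyfour$ and the above observation on $\max(\alpha,\beta)$ reduces the claim to the fact that $\polabstr{\alpha}$ and $\polabstr{\beta}$ are supported only at the last (constant) row of the vector, so that at every other coordinate the $+$ on the left is actually of the form $x + \nodep$, which coincides with $x \cup \nodep$ by Table~\ref{tb:multadddef}. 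The matching-monomial case is handled identically, with $\alpha X_i$ replacing $\alpha$ and the support now concentrated at row $i$.

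For the ``otherwise'' branch, $\polyone$ and $\polytwo$ share no matching monomial, so $\polyone \oplus \polytwo = \polyone + \polytwo$ and $\polabstr{\polyone \oplus \polytwo} = \polabstr{\polyone} + \polabstr{\polytwo}$. It suffices to check that $\polabstr{\polyone} + \polabstr{\polytwo} = \polabstr{\polyone} \cup \polabstr{\polytwo}$ under this assumption. From Table~\ref{tb:multadddef}, $+$ and $\cup$ agree on every pair of scalars except on $(\identity, \identity)$, where $+$ yields $\added$ and $\cup$ yields $\identity$. So the claim reduces to showing that no coordinate $i$ can have $\polabstr{\polyone}_i = \polabstr{\polytwo}_i = \identity$ simultaneously. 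Unfolding Definition~\ref{def:abstractionmultipolynomials} monomial by monomial, $\polabstr{\polyone}_i = \identity$ forces the monomial $X_i$ (with coefficient exactly $1$) to be present in $\polyone$ and to be the unique monomial of $\polyone$ mentioning $X_i$; the case $i = n$ similarly forces the constant $1$ to be the constant term of $\polyone$. If the same held for $\polytwo$, then $X_i$ (resp.\ the constant $1$) would be a matching monomial shared by both polynomials, contradicting the hypothesis of the ``otherwise'' branch.

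The main obstacle is the last paragraph: one has to characterise precisely, position by position, which monomial shapes can produce each of $\nodep, \identity, \added, \multiplied$ in the abstraction, and then observe that the ``otherwise'' regime of Definition~\ref{def:unionmultipolinomial} is exactly the regime in which abstraction sum and abstraction union coincide. Once this classification is in hand, all three cases of the induction close by direct computation.
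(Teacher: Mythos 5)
Your proof is correct and follows essentially the same route as the paper's: induction on the canonical forms, with the case split dictated by the three clauses of Definition~\ref{def:unionmultipolinomial}, and the first two clauses closed by observing that the abstractions of the peeled constants (resp.\ matching monomials) are concentrated on a single row, so that $+$ and $\cup$ can be interchanged coordinatewise after applying the induction hypothesis. The only genuine divergence is in the ``otherwise'' clause: the paper argues there that any variable occurring in both polynomials must abstract to $\multiplied$ on both sides, which is not literally true (take $\polyone = X_1$ and $\polytwo = X_1 X_2$: the first contributes only $\identity$ at row $1$), whereas you reduce the clause to the sharper scalar observation that $+$ and $\cup$ agree on every pair except $(\identity,\identity)$, and then exclude that pair because it would force the monomial $X_i$ with coefficient $1$ (or the constant $1$) to occur in both polynomials, i.e.\ a matching monomial, contradicting the hypothesis of the ``otherwise'' branch; this is tighter and in fact repairs the paper's imprecision while reaching the same conclusion. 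One shared soft spot: like the paper (``very similar''), you dispatch the matching-monomial clause by analogy with the constant clause, which quietly skips the fact that the residues $\polythree,\polyfour$ may still mention $X_i$, so at row $i$ the interchange of $+$ and $\cup$ is not of the form $x+\nodep$; it still goes through because both leading abstractions at row $i$ are at least $\identity$, but a sentence making that check explicit would strengthen both your argument and the original.
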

 \begin{proof} by induction on the size of the two polynomials.
  By definition the union between two polynomial is defined in \ref{def:unionmultipolinomial} as the maximum of the comparable
  monomials. Let's analyze the different cases:
  \begin{itemize}
   \item If $\polyone = c_1 + \polythree$ and $\polytwo = c_2 + \polyfour$. By induction, 
   $\polabstr{\polythree\oplus\polyfour} = \polabstr{\polythree} \cup \polabstr{\polyfour}$. 
   By definition \ref{def:unionmultipolinomial}, $\polyone \oplus \polytwo$ is $\max{(\alpha,\beta)} + \polyunion{\polythree}{\polyfour}$
   and so, by definition \ref{def:abstractionmultipolynomials}, the abstraction is defined as $\polabstr{\max{(\alpha,\beta)}} +
   \polabstr{\polyunion{\polythree}{\polyfour}}$. By using induction hypothesis we get 
   $\polabstr{\max{(\alpha,\beta)}} + \polabstr{\polythree} \cup \polabstr{\polyfour}$. It's clear that $\polabstr{\max{(\alpha,\beta)}}$
   is equal to $\max{(\polabstr{\alpha},\polabstr{\beta})}$, since the abstraction take in account the value of the constants. 
   This is, by definition, the union of the two abstracted polynomials. We get, so 
   $(\polabstr{\alpha}\cup \polabstr{\beta}) + (\polabstr{\polythree} \cup \polabstr{\polyfour})$.
   Notice how the abstractions 
   of the two constants are two column vectors having $\nodep$ everywhere except for the last row.
   We can so rewrite the previous equation as 
   $(\polabstr{\alpha} +\polabstr{\polythree}) \cup (\polabstr{\beta} +\polabstr{\polyfour})$, that is the thesis.
   \item If $\polynormform\polyone = \alpha X_i+\polythree$ and $\polynormform\polytwo = \beta X_i + \polyfour$. This case is very similar
   to the previous one.
   \item The last case is where the two polynomials are not comparable. In this case, the union is defined as $\polyone + \polytwo$. 
   There are two cases:
	\begin{itemize}
	 \item If some variables are present just in one polynomial and not in the other one, then the 
	 correspondent rows, for each single variable, is not influenced by the abstraction of the polynomial in which the variable does
	 not appear.
	 \item If some variables are present in both. In this case it means that the variables appear in at least one monomial 
	 with grade gretar than one or in a monomial having more than one variable. In both cases the associated abstracted value for both 
	 is $\multiplied$.
	\end{itemize}
  The thesis holds. This concludes the proof.
  \end{itemize}
 \end{proof}

 \begin{lemma}\label{lm:multipolsum} Let $\multipolyone$ and $\multipolytwo$ two
   positive multipolynomials, then it holds that
   $\polabstr{\polyunion{\multipolyone}{\multipolytwo}} = \polabstr{\multipolyone}
   \cup \polabstr{\multipolytwo}$.
 \end{lemma}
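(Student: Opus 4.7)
The plan is to reduce the multipolynomial statement directly to the polynomial-level result already established in Lemma~\ref{lm:polysum}, by exploiting that every operation involved (sum of multipolynomials, abstraction of a multipolynomial, and union of matrices) is defined componentwise or columnwise. Concretely, I will fix an arbitrary column index $\indexone$ and show that the $\indexone$-th column of $\polabstr{\polyunion{\multipolyone}{\multipolytwo}}$ coincides with the $\indexone$-th column of $\polabstr{\multipolyone} \cup \polabstr{\multipolytwo}$; since $\indexone$ is arbitrary, the two matrices are equal.

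First, by the definition of $\oplus$ for multipolynomials, the $\indexone$-th component of $\polyunion{\multipolyone}{\multipolytwo}$ is the polynomial $\polyunion{\multipolyone_\indexone}{\multipolytwo_\indexone}$. Then, by the definition of the abstraction of a multipolynomial (the $\indexone$-th column is the abstraction of the $\indexone$-th component), the $\indexone$-th column of $\polabstr{\polyunion{\multipolyone}{\multipolytwo}}$ equals $\polabstr{\polyunion{\multipolyone_\indexone}{\multipolytwo_\indexone}}$. Applying Lemma~\ref{lm:polysum} to the two polynomials $\multipolyone_\indexone$ and $\multipolytwo_\indexone$, this in turn equals $\polabstr{\multipolyone_\indexone} \cup \polabstr{\multipolytwo_\indexone}$.

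On the other side, by the componentwise definition of $\cup$ on matrices, the $\indexone$-th column of $\polabstr{\multipolyone} \cup \polabstr{\multipolytwo}$ is the union of the $\indexone$-th columns of $\polabstr{\multipolyone}$ and $\polabstr{\multipolytwo}$, which by definition of multipolynomial abstraction are exactly $\polabstr{\multipolyone_\indexone}$ and $\polabstr{\multipolytwo_\indexone}$. Thus the two sides agree columnwise, which completes the argument. There is essentially no obstacle here: all the real work sits in Lemma~\ref{lm:polysum}, and what remains is the purely mechanical verification that all three constructions commute with projection onto a single column. If any subtlety arises, it is only in making sure that the positivity hypothesis on the multipolynomials is inherited componentwise so that Lemma~\ref{lm:polysum} applies, which is immediate from the definition of positivity on tuples.
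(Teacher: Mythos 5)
Your proposal is correct and follows essentially the same route as the paper: reduce to the componentwise definitions of $\oplus$, of multipolynomial abstraction, and of matrix union, and then invoke Lemma~\ref{lm:polysum} on each component. The paper's proof is just a terser version of this same columnwise reduction.
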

 \begin{proof}
 By definition of sum between multipolynomials \ref{def:unionmultipolinomial} we know that sum is defined componentwise,
  $(\multipolyone \oplus \multipolytwo)_\indexone =  \polyunion{\multipolyone_\indexone}{\multipolytwo_\indexone} $.
  By lemma \ref{lm:polysum} we prove the theorem.
 \end{proof}

 \begin{lemma}\label{lm:multipolmult}
   Let $\multipolyone$ and $\multipolytwo$ two positive multipolynomials (over $n$ variables) in
   canonical form, then it holds that $\polabstr{\multipolyone\cdot
   \multipolytwo} \le \polabstr{\multipolytwo}\times \polabstr{\multipolyone}$
 \end{lemma}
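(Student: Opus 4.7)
The plan is to reduce the statement to a single-polynomial substitution claim and then prove that by structural induction. Reading $\multipolyone\cdot\multipolytwo$ as the multipolynomial whose $\indexone$-th component is $\multipolyone_\indexone(\multipolytwo_1,\ldots,\multipolytwo_\numeone)$, the $\indexone$-th column of $\polabstr{\multipolyone\cdot\multipolytwo}$ is $\polabstr{\multipolyone_\indexone(\multipolytwo_1,\ldots,\multipolytwo_\numeone)}$, while the $\indexone$-th column of $\polabstr{\multipolytwo}\times\polabstr{\multipolyone}$ is $\polabstr{\multipolytwo}\times\polabstr{\multipolyone_\indexone}$. So it suffices to prove, for every polynomial $\polyone$, the auxiliary inequality $\polabstr{\polyone(\multipolytwo_1,\ldots,\multipolytwo_\numeone)} \le \polabstr{\multipolytwo}\times\polabstr{\polyone}$.

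I would prove this auxiliary inequality by induction on the structure of $\polyone$. The base cases, where $\polyone$ is one of $0$, $1$, $c>1$, $X_\indexone$, or $\alpha X_\indexone$, are settled by direct computation: $\polabstr{\polyone}$ is a ``basis'' vector ($\vectorO_\numeone$, $\vectorID_\numeone$, $\vectorA_\numeone$, $\vectorID_\indexone$ or $\vectorA_\indexone$), and multiplying $\polabstr{\multipolytwo}$ by such a vector selects a single column of $\polabstr{\multipolytwo}$, possibly scaled by $\added$, which matches $\polabstr{\polyone(\multipolytwo_1,\ldots,\multipolytwo_\numeone)}$ up to the convention chosen for the ``constants'' column.

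The inductive sum case $\polyone=\polytwo+\polythree$ uses that the abstraction of a polynomial sum is the sum of abstractions and that matrix product distributes over sum: $\polabstr{(\polytwo+\polythree)(\multipolytwo)} = \polabstr{\polytwo(\multipolytwo)}+\polabstr{\polythree(\multipolytwo)} \le \polabstr{\multipolytwo}\times\polabstr{\polytwo}+\polabstr{\multipolytwo}\times\polabstr{\polythree} = \polabstr{\multipolytwo}\times(\polabstr{\polytwo}+\polabstr{\polythree})$, applying the induction hypothesis in the middle step. The product case $\polyone=\polytwo\cdot\polythree$ starts from $\polabstr{(\polytwo\cdot\polythree)(\multipolytwo)} = \multiplied\cdot\polabstr{\polytwo(\multipolytwo)} \cup \multiplied\cdot\polabstr{\polythree(\multipolytwo)}$ and combines the induction hypothesis with two algebraic facts about matrices over $\values$: scalar multiplication commutes with matrix product, $\multiplied\cdot(\polabstr{\multipolytwo}\times\vectorzero) = \polabstr{\multipolytwo}\times(\multiplied\cdot\vectorzero)$; and matrix product preserves $\cup$ only up to inequality, $(\polabstr{\multipolytwo}\times\vectorone)\cup(\polabstr{\multipolytwo}\times\vectortwo) \le \polabstr{\multipolytwo}\times(\vectorone\cup\vectortwo)$. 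Chaining these yields $\polabstr{(\polytwo\cdot\polythree)(\multipolytwo)} \le \polabstr{\multipolytwo}\times(\multiplied\cdot\polabstr{\polytwo}\cup\multiplied\cdot\polabstr{\polythree}) = \polabstr{\multipolytwo}\times\polabstr{\polytwo\cdot\polythree}$.

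The main obstacle is precisely that inequality in the product case: $(\polabstr{\multipolytwo}\times\vectorone)\cup(\polabstr{\multipolytwo}\times\vectortwo) \le \polabstr{\multipolytwo}\times(\vectorone\cup\vectortwo)$ is exactly the step where precision is lost, and it is the reason the main lemma is an inequality rather than an equality. It itself reduces to the monotonicity of $+$ and $\times$ in the semi-ring $\values$ with respect to the order $\nodep<\identity<\added<\multiplied$, which is a routine case analysis on the tables already introduced. A secondary bookkeeping concern is the treatment of the last (``constants'') row and column of the matrices, which must be checked to behave consistently in the base cases involving constants.
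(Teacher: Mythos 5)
Your argument is correct, but it takes a genuinely different route from the paper's. You reduce the matrix statement to the column-wise claim $\polabstr{\polyone(\multipolytwo_1,\ldots,\multipolytwo_\numeone)}\le\polabstr{\multipolytwo}\times\polabstr{\polyone}$ and prove it by structural induction on $\polyone$, following the clauses of the abstraction definition and using three algebraic facts about $\values$-matrices: distributivity of $\times$ over $+$, monotonicity, and the sub-distributivity $(\matrixone\times\vectorone)\cup(\matrixone\times\vectortwo)\le\matrixone\times(\vectorone\cup\vectortwo)$. The paper instead works entry by entry: it fixes the coefficient in position $(\indexone,\indextwo)$, rewrites it as $\polabstr{\multipolyone_\indextwo(\multipolytwo_1,\ldots,\multipolytwo_\numeone)}_\indexone$, and performs a case analysis on whether $\indexone$ is the constants row and on the shape of the monomials of $\multipolyone_\indextwo$ into which the components $\multipolytwo_\indexthree$ containing $X_\indexone$ are substituted ($\alpha X_\indexfour\cdot\polytwo(\multi X)$, $\alpha X_\indexfour$ with $\alpha>1$, or plain $X_\indexfour$). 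Your compositional version is more modular and pinpoints exactly where the estimate becomes an inequality (the $\cup$ versus $\times$ interaction, plus the constants column), whereas the paper locates the slack in the constants-row case; both rely on the same bookkeeping convention that the last row/column of an abstracted multipolynomial tracks constants (the paper explicitly assumes the $(\numeone+1,\numeone+1)$ entry is $\identity$). One small imprecision to fix: since abstraction is defined on canonical forms, after composing you must re-canonicalise, and merging of like monomials means your sum and product steps should be stated as inequalities, e.g.\ $\polabstr{\polytwo(\multipolytwo)+\polythree(\multipolytwo)}\le\polabstr{\polytwo(\multipolytwo)}+\polabstr{\polythree(\multipolytwo)}$ and $\polabstr{\polytwo(\multipolytwo)\cdot\polythree(\multipolytwo)}\le\multiplied\cdot\polabstr{\polytwo(\multipolytwo)}\cup\multiplied\cdot\polabstr{\polythree(\multipolytwo)}$, rather than equalities; since the target is itself an inequality, this does not affect the conclusion.
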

 \begin{proof}
 We will consider the element in position $\indexone,\indextwo$ and so we have:
  $(\polabstr{\multipolytwo}\times \polabstr{\multipolyone})_{\indexone,\indextwo} = 
  \sum_\indexthree{ \polabstr{\multipolytwo}_{\indexone,\indexthree} \times  \polabstr{\multipolyone}_{\indexthree,\indextwo}} $.
  We can start by making some algebraic passages:
  \[
    \polabstr{\multipolyone\cdot\multipolytwo}_{\indexone,\indextwo} = 
    \polabstr{ \multipolyone(\multipolytwo_1,\ldots,\multipolytwo_{n})}_{\indexone,\indextwo} =
    \polabstr{ \multipolyone_\indextwo(\multipolytwo_1,\ldots,\multipolytwo_{n})}_{\indexone}
  \]
  
  The equality holds because we are considering the element in the $\indextwo$-th column. Since we are interested at the element in 
  position $\indexone$-th we have to understand how the variable $X_i$ (or constant) in each $\multipolytwo_k$ is substituted. 
  
  \begin{itemize}
  \item Case where $\indexone = \numeone+1$. 
      \begin{itemize}
       \item If none of the polynomials $\multipolytwo_k$ has a constant inside, then the proof is evident, since the only possible constant appearing
	    in the result is the possible constant appearing in $\multipolyone_j$. 
	    Recall that the element in position $(\numeone+1,\numeone+1)$ is $\identity$ by definition, so 
	    $\sum_\indexthree{ \polabstr{\multipolytwo}_{\indexone,\indexthree} \times  \polabstr{\multipolyone}_{\indexthree,\indextwo}} $
	    contain at least $\polabstr{\multipolyone}_{\numeone+1,\indexone}$.
       \item Otherwise some constants appear in some $\multipolytwo_k$. This means that the expected abstraction for the element at position
	    $(\numeone+1,\indextwo)$ may be $\added$ or $\identity$. If $\identity$ is the result, then is clear that and equality holds,
	    since it means that the constant is $1$.
	    The inequality hold if the expected result is $\added$, since that on the right side we have to 
	    perform the following sum:
	    $\sum_\indexthree{ \polabstr{\multipolytwo}_{\indexone,\indexthree} \times  \polabstr{\multipolyone}_{\indexthree,\indextwo}}$
	    and we could find an $\added$ or $\multiplied$ value.
      \end{itemize}
  \item Case where $\indexone < \numeone +1$. In this case we are considering how the variable $X_i$ appears.
	We have four possibilities:
	    \begin{itemize}
	     \item If $X_\indexone$ does not appear in any $\multipolytwo_k$ polynomials. In this case the expected abstract value is $\nodep$.
		    Is easy to check that this holds, since on the left side of the inequality we get $\nodep$ and on the right side we get
		    $\sum_\indexthree{ \polabstr{\multipolytwo}_{\indexone,\indexthree} \times  \polabstr{\multipolyone}_{\indexthree,\indextwo}}$
		    that is $\nodep$, since all $\polabstr{\multipolytwo}_{\indexone,\indexthree}$ are $\nodep$.
	     \item In the following we will consider that $X_\indexone$ appears in some $\multipolytwo_\indexthree$ polynomials.
		   Call them $\multi\multipolytwo_{X_i}$.
		   If some of the polynomials where $X_\indexone$ appears is substituted in some monomial of $\multipolyone_j$
		   of shape as $\alpha X_\indexfour\cdot\polytwo(\multi X)$ in place of some $X_\indexfour$, then for sure on the right 
		   side of the inequality we will get a value $\multiplied$. On the right side, considering 
		   $\sum_\indexthree{ \polabstr{\multipolytwo}_{\indexone,\indexthree} \times  \polabstr{\multipolyone}_{\indexthree,\indextwo}}$
		   we will multiply for sure an $\multiplied$ value with the abstracted value for $X_\indexone$ of the 
		   $\multi\multipolytwo_{X_i}$ where it appears. The result is so for sure $\multiplied$.
	     \item Otherwise, if some of the polynomials where $X_\indexone$ appears is substituted in some monomial of $\multipolyone_j$
		   of shape as $\alpha X_\indexfour$ ($\alpha > 1$), then the expected abstract value depends on how $X_i$ appears in
		   $\multi\multipolytwo_{X_i}$. For all the three possible cases of $X_i$ in $\multi\multipolytwo_{X_i}$
		   the abstracted value obtained on the left side is equal to the value obtained on the right side.
	     \item Otherwise, 
		   $X_\indexone$ appears is substituted in some monomial of $\multipolyone_j$
		   of shape as $X_\indexfour$; then the substitution gives in output exactly the
		   $\multi\multipolytwo_{X_i}$ substituted. The equality holds because on the right side we are going to multiply
		   by $\identity$ the abstracted value found for each $\multipolytwo_k$.
	    \end{itemize}
   This concludes the proof.
  \end{itemize}
 \end{proof}

 Let's now present the results about the probabilistic polytime soundness. 
 The following theorem tell us that at each step of execution of a program, size of variables are polynomially 
 correlated with size of variables in input.

\begin{theorem}\label{th:polysizeboundcommands}
 Given a command $\comzero$ well typed in $\oursystem$ with matrix $\matrixone$, such that 
 $\pair{\comzero}{\statefuncone}\comred{\alpha} \statefunctwo$ we get that exists a multipolynomial 
 $\multipolyone$ such that for all stacks $\bigvar_\indexone$ we have that 
 $\size{\statefunctwo(\bigvar_\indexone)}\le \multipolyone_{\indexone}(\size{\statefuncone(\bigvar_1)},
 \ldots,\size{\statefuncone(\bigvar_\numeone)}  )$ and $\polabstr{\multipolyone}$ is $\matrixone$.
\end{theorem}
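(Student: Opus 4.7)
The plan is to proceed by structural induction on the typing derivation $\vdash \comzero : \matrixone$, exhibiting for each rule a multipolynomial $\multipolyone$ over $\numeone$ variables such that $\polabstr{\multipolyone} = \matrixone$ and, whenever $\pair{\comzero}{\statefuncone} \comred{\alpha} \statefunctwo$ holds via the operational semantics, the size $\size{\statefunctwo(\bigvar_\indexone)}$ is bounded by $\multipolyone_\indexone$ evaluated on the input sizes $\size{\statefuncone(\bigvar_1)}, \ldots, \size{\statefuncone(\bigvar_\numeone)}$.

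The base cases are straightforward constructions. For \textsc{(Skip)}, \textsc{(Pop)}, \textsc{(Axiom-Reg)}, and \textsc{(Asgn)}, I take (a small variant of) the identity tuple $(X_1, \ldots, X_\numeone)$; for \textsc{(Asgn)} the $\indexone$-th component becomes $X_\indextwo$, matching $\vectorID_\indextwo$ in the abstracted matrix. For \textsc{(Const-0)}, \textsc{(Const-L)}, \textsc{(Const-A)} the $\indexone$-th component is the constant $0$, $1$, or $n$, whose abstraction is respectively $\vectorO_\numeone$, $\vectorID_\numeone$, $\vectorA_\numeone$. For \textsc{(Push)}, the $\indexone$-th component becomes $X_\indexone + 1$, abstracting to $\vectorID_\indexone + \vectorID_\numeone$. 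For \textsc{(FunCall)}, I reuse by induction the multipolynomial obtained for the function body and apply the re-ordering operator so that the formal parameters are replaced by the actual arguments $\stackone_\indexthree, \ldots, \stackone_\indexfour$, matching $\vectorasjust{\matrixone_j}{\cdots}$.

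For the inductive rules, \textsc{(Concat)} and \textsc{(IfThen)} are handled via the algebraic lemmas already proved. For \textsc{(Concat)}, if $\multipolyone$ and $\multipolytwo$ bound $\comone$ and $\comtwo$, then $\multipolytwo \cdot \multipolyone$ bounds $\comone;\comtwo$ since semantic composition substitutes the intermediate sizes; Lemma \ref{lm:multipolmult} gives $\polabstr{\multipolytwo \cdot \multipolyone} \le \polabstr{\multipolyone} \times \polabstr{\multipolytwo} = \matrixone \times \matrixtwo$, and \textsc{(Subtyp)} absorbs the inequality. For \textsc{(IfThen)}, each branch yields a multipolynomial and the sum $\multipolyone \oplus \multipolytwo$ bounds both, while Lemma \ref{lm:multipolsum} matches the abstraction to $\matrixone \cup \matrixtwo$. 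The \textsc{(Subtyp)} rule itself is immediate from monotonicity of evaluation and abstraction.

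The main obstacle is the \textsc{(Loop)} case. Given $\vdash \comone : \matrixone$ and the side-condition $(\closure{\matrixone})_{\indexone,\indexone} < \added$ for every $\indexone$, I need a single multipolynomial that bounds the stack sizes after $n = \size{\statefuncone(\bigvar_\indexthree)}$ iterations, for arbitrary $n$. The plan is to iterate the inductive hypothesis: after $k$ iterations the sizes are bounded by $\multipolyone^{\cdot k}$ whose abstraction is majorised by $\matrixone^k$; the diagonal guard rules out affine self-dependence, which is precisely what prevents iterated composition from producing exponential degree, so the sequence $\matrixone^0, \matrixone^1, \ldots$ stabilises and its union is $\closure{\matrixone}$. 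A uniform polynomial majorant over all $k \le n$ is then obtained whose abstraction is $\closure{\matrixone}$. The final step is the merge-down correction: since the actual number of iterations is driven by $\size{\bigvar_\indexthree}$, every stack written inside the loop inherits a dependency on $X_\indexthree$. I would verify clause by clause that the arithmetic of substituting $n = X_\indexthree$ into the iterated multipolynomial produces exactly the dependencies described by the three cases of $\mergedown{\cdot}{\indexthree}$ (the $\multiplied$ case when two non-trivial dependencies coexist, the preservation of $\identity$ or $\added$ constant contributions, and the zeroing of the last row). This last verification, and the careful separation between constant and variable contributions enforced by the special treatment of the $\dimensionone$-th row and column, is where I expect the bulk of the bookkeeping.
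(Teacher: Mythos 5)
Your proposal is correct and follows essentially the same route as the paper's own proof: structural induction on the typing derivation, the same explicit multipolynomials for the axioms, Lemmas \ref{lm:multipolsum} and \ref{lm:multipolmult} for \textsc{(IfThen)} and \textsc{(Concat)}, and for \textsc{(Loop)} the stabilising union of iterated compositions giving a bound abstracting to $\closure{\matrixone}$, followed by the loop correction that multiplies the non-self-dependent part by $\bigvar_\indexthree$ so that the abstraction becomes $\mergedown{(\closure{\matrixone})}{\indexthree}$. The only nuance worth noting is that the stabilisation of the powers is due to finiteness of the matrix algebra, while the diagonal side-condition serves (as you also say) to keep the concrete iterated bounds polynomial; with that reading, your bookkeeping plan matches the paper's argument.
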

\begin{proof}
 By structural induction on typing tree. We will present just the most important cases.
 \begin{itemize}
  \item If the last rule is \textsc{(Const-0)}, it means that we have only one stack and its size is $0$. The relative vector in the matrix
  is a $\vectorO$. We can choose the constant polynomial $0$, whose abstraction is exactly $\vectorO$. The polynomial $0$ bounds the size
  of the stack.
  \item If the last rule is one of the following \textsc{(Skip)}, \textsc{(Const-A)}, \textsc{(Const-L)}, 
  \textsc{(Axiom-Reg)}, then the proof is trivial.
  \item If the last rule is \textsc{(Push)}, then we know that the size of of the stack $S_i$ has been increased by $1$. 
  The associated vector is a column vector having $\identity$ on the $i$-th row and $\identity$ on the last line.
  The correspondent polynomial, $X_i + 1$, is the correct bounding polynomial for the $i$-th stack.
  \item If the last rule is \textsc{(Subtyp)}, then by induction on the hypothesis we can easily find a new polynomial bound.
  \item If the last rule is \textsc{(Asgn)}, then we know that the size of the $i$-th stack is equal to the size of the stack $j$-th.
  So, the polynomial bounding the size of the $i$-th stack uses at least two variables and the correct one is $P_i(X_i,X_j)=X_j$.
  \item If the last rule is \textsc{(IfThen)}, then by applying induction hypothesis on the two premises we have multipolynomial bounds $\multipolytwo$, $\multipolythree$ such that $\polabstr{\multipolytwo} = \matrixone$ and $\polabstr{\multipolythree} = \matrixtwo$.
  By lemma \ref{lm:multipolsum} we get the thesis.
  \item If the last rule is \textsc{(Fun)}, then by applying the induction hypothesis on the premise we directly prove the thesis.
  \item If the last rule is \textsc{(FunCall)}, then by applying the induction hypothesis on the premise we have a polynomial bound $\multipolytwo$ such that $\polabstr{\multipolytwo} = \matrixone$. For all the stacks different from the $i$-th, the bounding polynomial
  is trivial, while for the stack $S_i$ depends on the result of the function call.
  
  The function return the value stored in the $j$-th stack used inside the function. According to the actual parameters, the actual 
  polynomial bound is different from the one retrieving by applying the induction hypothesis.
  
  \item If last rule is \textsc{(Concat)}, then by lemma \ref{lm:multipolmult} we can easily conclude the thesis.
  \item If last rule is \textsc{(Loop)}, we are in the following case; so, $\matrixone$ is 
  $\mergedown{ (\closure{\matrixtwo}) }{\indexthree}$. The typing and the associated semantic are the following:
$$  
\AxiomC{$
  \vdash \comone : \matrixtwo $} \AxiomC{$ \forall \indexone,
  {(\closure\matrixtwo)}_{\indexone,\indexone} < \added $}
\RightLabel{\textsc{(Loop)}} \BinaryInfC{$ \vdash
  \iloop{\bigvar_\indexthree}{\comone} :
  \mergedown{(\closure{\matrixtwo})}{\indexthree}$} \DisplayProof
\hspace{10pt}
 \hspace{10pt}
\AxiomC{$\pair{\bigvar_\indexthree}{\statefunc} \artred \termstack{\smany{\constone}{n}{}}$}
\AxiomC{$\pair{\comone}{\statefunc} \comred{\probone_1} \statefunc_1$}
\noLine\UnaryInfC{$\ldots$}
\noLine\UnaryInfC{$\pair{\comone}{\statefunc_{\numeone-1}}
  \comred{\probone_\numeone} \statefunc_{\numeone}$} 
  \BinaryInfC{$ \pair{\iloop{\bigvar_\indexthree}{\comone}}{\statefunc}
  \comred{\Pi\probone_\indexone} \statefunc_{\numeone }$} \DisplayProof
$$

We consider just the case where $n>0$, since the other one is trivial.
By induction on the premise we have a multipolynomial $\multipolyone$ bound for command $\comone$ such that its abstraction is $\matrixtwo$. 
 If $\multipolyone$ is a bound for $\comone$, then $\multipolyone\cdot\multipolyone$ is a bound for $\comone;\comone$ and $(\multipolyone\cdot\multipolyone)\cdot\multipolyone$ is a bound for $\comone;\comone;\comone$ and so on. All of these are multipolynomial because we are composing multipolynomials with multipolynomials.
 
 By lemma \ref{lm:multipolmult} and knowing that $\polabstr{\multipolyone}$ is $\matrixtwo$ we can easily deduce to have a multipolynomial bound for every iteration of command $\comone$. In particularly by lemma \ref{lm:multipolsum} we can easily sum up everything and find out a multipolynomial $\multipolytwo$ such that $\polabstr{\multipolytwo}$ is $\closure{\matrixtwo}$.
 This means that further iterations of sum of powers of $\multipolyone$ will not change the abstraction of the result.

 So, for every iteration of command $\comone$ we have a multipolynomial bound whose abstraction cannot be greater than $\closure{\matrixtwo}$. So, we study the worst case; we analyse the matrix $\closure\matrixtwo$.
 
 Side condition on \textsc{(Loop)} rule tells us to check elements on the main diagonal. Recall that by definition of union closure, elements on the main diagonal are supposed to be greater then $\nodep$. We required also to be less then $\added$.
 Let's analyse all the possibilities of an element in position $\indexone,\indexone$:
  \begin{itemize}
  \item Value $\nodep$ means no dependencies. If value is $\identity$ it means that $\multipolytwo_\indexone$ concrete bound for such column has shape $\bigvar_\indexone + \polythree(\multi\bigvar)$, where $\bigvar_\indexone$ does not appear in $\polythree(\multi\bigvar)$. Iteration of such assignment gives us polynomial bound increment of the value of variable $\bigvar_\indexone$.
  \item If value is $\added$ could means that $\multipolytwo_\indexone$ concrete bound for such column has shape $\alpha\bigvar_\indexone + \polythree(\multi\bigvar)$ (for some $\alpha>1$), where $\bigvar_\indexone$ does not appear in $\polythree(\multi\bigvar)$. Iteration of such assignment lead us to exponential blow up on the size of $\bigvar_\indexone$.
  \item Otherwise value is $\multiplied$. This case is worse than the previous one. It's evident that we could have exponential blow up on the size of $\bigvar_\indexone$.
 \end{itemize}
 
 The abstract bound ${\matrixtwo}$ is still not a correct abstract bound for the loop because loop iteration depends on some variable $\bigvar_\indexthree$. We need to adjust our bound in order to keep track of the influence of variable $\bigvar_\indexthree$ on loop iteration. 
 
 We take multipolynomial $\multipolytwo$ because we know that further iterations of the algorithm explained before will not change its abstraction $\polabstr{\multipolytwo}$. Looking at $\indexone$-th polynomial of multipolynomial $\multipolytwo$ we could have three different cases. We behave in the following way:
 \begin{itemize}
  \item The polynomial has shape $\bigvar_\indexone + \polyone(\multi\bigvar)$. In this case we multiply the polynomial $\polyone$ by $\bigvar_\indexthree$ because this is the result of iteration. We substitute the $\indexone$-th polynomial with the canonical form of polynomial $\bigvar_\indexone + \polyone(\multi\bigvar)\cdot\bigvar_\indexthree$.
  \item The polynomial has shape $\bigvar_\indexone + \alpha$, for some constant $\alpha$. In this case we substitute with $\bigvar_\indexone + \alpha\cdot\bigvar_\indexthree$.
  \item The polynomial has shape $\bigvar_\indexone$ or $\bigvar_\indexone$ does not appear in the polynomial. We leave as is.
 \end{itemize}
 
 In this way we generate a new multipolynomial, call it $\multipolythree$.
 The reader should easily check that these new multipolynomial expresses a good bound of iterating $\multipolytwo$ a number of times equal to $\bigvar_\indexthree$. Should also be quite easy to check that $\polabstr{\multipolythree}$ is exactly $\mergedown{(\closure{\matrixtwo})}{\indexthree}$. This concludes the proof.
\end{itemize}
\end{proof}

Polynomial bound on size of stacks is not enough; we should also prove polynomiality of number of steps. Since all the programs generated by the language terminate and all the stacks are polynomially bounded in their size, the theorem follows straightforward.

\begin{theorem}\label{th:polytimeboundcommands}
 Let $\comzero$ be a command well typed in $\oursystem$ and $\statefuncone ,\statefunc_\numeone$ state functions. If $\derone:\pair{\comone}{\statefunc_0}\comred{\alpha}\statefunc_\numeone$, then there is a polynomial $\polyone$ such that $\size{\derone}$ is bounded by $\polyone(\sum_\indexone \size{\statefunc_0(\bigvar_\indexone)  } )$.
\end{theorem}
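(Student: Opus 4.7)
The plan is to proceed by structural induction on the typing derivation of $\comzero$, using Theorem~\ref{th:polysizeboundcommands} as the key ingredient: at every point in the execution, all stack sizes remain polynomially bounded in the sizes of the input stacks. Since the size of a derivation tree is essentially the count of elementary reduction steps performed, we only need to bound how many such steps each construct can generate.

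For the atomic rules (\textsc{(Skip)}, \textsc{(Pop)}, \textsc{(Push)}, \textsc{(Asgn)}, \textsc{(Const-*)}, \textsc{(Axiom-Reg)}, \textsc{(Const-L)}, \textsc{(Const-A)}, \textsc{(Const-0)}), the derivation has constant size, so a constant polynomial suffices. For \textsc{(IfThen)}, the derivation contributes $1$ plus the size of the chosen branch's sub-derivation; both branches are polynomially bounded by the induction hypothesis, so the maximum (hence the sum) of the two bounding polynomials works. The \textsc{(Subtyp)} case inherits the bound directly from the premise.

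For \textsc{(Concat)} $\comone;\comtwo$, the induction hypothesis gives a polynomial $\polyone_1$ bounding $\size{\derone_1}$ in terms of $\sum_\indexone \size{\statefuncone(\bigvar_\indexone)}$, and a polynomial $\polyone_2$ bounding $\size{\derone_2}$ in terms of the sizes of the \emph{intermediate} state $\statefunctwo$ reached after $\comone$. By Theorem~\ref{th:polysizeboundcommands} applied to $\comone$, each $\size{\statefunctwo(\bigvar_\indexone)}$ is bounded by a polynomial $\polytwo_\indexone$ in the input sizes; substituting $\polytwo_\indexone$ into $\polyone_2$ yields a polynomial in the input sizes, and the total bound is the sum of this with $\polyone_1$. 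The \textsc{(FunCall)} case is analogous: the derivation for the call has size one more than the derivation for the body evaluated at the actual arguments, so the induction hypothesis for the body (obtained via the \textsc{(Fun)} rule) combined with Theorem~\ref{th:polysizeboundcommands} gives the required bound.

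The crucial and most delicate case is \textsc{(Loop)} $\iloop{\bigvar_\indexthree}{\comone}$. Here the derivation consists of $\numeone = \size{\statefunc(\bigvar_\indexthree)}$ sub-derivations of $\comone$, executed successively from states $\statefunc_0, \statefunc_1, \ldots, \statefunc_{\numeone-1}$. By Theorem~\ref{th:polysizeboundcommands} applied to the loop command itself, the quantity $\numeone$ is bounded by a polynomial $\polyone_{\text{iter}}$ in the input sizes, and \emph{every} intermediate state $\statefunc_\indextwo$ satisfies $\sum_\indexone \size{\statefunc_\indextwo(\bigvar_\indexone)} \le \polyone_{\text{state}}(\sum_\indexone \size{\statefuncone(\bigvar_\indexone)})$ for some polynomial $\polyone_{\text{state}}$. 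By the induction hypothesis on $\comone$, each individual sub-derivation has size bounded by $\polyone_{\text{body}}$ applied to the size of its starting state. Composing, the total derivation size is at most $1 + \polyone_{\text{iter}} \cdot \polyone_{\text{body}} \circ \polyone_{\text{state}}$ of the input sizes, which is again polynomial. The main obstacle is exactly this: avoiding a size blow-up across iterations. It is resolved by invoking Theorem~\ref{th:polysizeboundcommands} globally on the loop rather than iterating a per-step bound, so that the polynomial for state sizes does not get composed with itself $\numeone$ times. Note that the degree of the resulting polynomial grows with the syntactic nesting depth of loops, but this depth is fixed by the program, so the bound remains a polynomial in the input.
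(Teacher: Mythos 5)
Your proposal is correct and follows essentially the same route as the paper, which simply remarks that termination plus the polynomial size bounds of Theorem~\ref{th:polysizeboundcommands} yield the result and then proves the theorem ``by induction on the associated semantic proof tree.'' Your write-up is in fact a more detailed elaboration of that one-line argument, with the loop case (uniformly bounding the number of iterations and the intermediate states via the size theorem) spelled out explicitly where the paper leaves it implicit.
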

\begin{proof}
 By induction on the associated semantic proof tree.
\end{proof}

\subsection{Probabilistic Polynomial Soundness}

Nothing has been said about probabilistic polynomial soundness. Theorems \ref{th:polysizeboundcommands} and \ref{th:polytimeboundcommands} tell us just about polytime soundness. Probabilistic part is now introduced. We will prove probabilistic polynomial soundness following idea in \cite{PPTUDL2011}, by using ``representability by majority''.

\begin{definition}[Representability by majority]\label{def:rappresentbymajority}
 Let $\subst{\multi\sigmazero}{\bigvar}{\numeone}$ define as $\forall \bigvar, \sigmazero(\bigvar)=\numeone$.
Then $\comzero$ is said to \emph{represent-by-majority} a language
$L\subseteq\NN$ iff:
\begin{enumerate}
\item
  If $n\in L$ and $\pair{\comzero}{ \subst{\multi\sigmazero}{\bigvar}{\numeone} } \comdistr \distrone$, then $\distrone(\sigma_0) \ge \sum_{m>0}\distrone(\sigma_\numetwo)$;
\item
  If $n\notin L$ and $\pair{\comzero}{ \subst{\multi\sigmazero}{\bigvar}{\numeone} } \comdistr \distrone$, then $\sum_{\numetwo>0}\distrone(\sigma_\numetwo) > \distrone(\sigma_0)$.
\end{enumerate} 
\end{definition}

Observe that every command $\comzero$ in $\oursystem$ represents by majority a language as defined in \ref{def:rappresentbymajority}. In literature \cite{AroraBarak} is well known that we can define $\PP$ by majority. We say that the probability error should be at most $\frac{1}{2}$ when we are considering string in the language and strictly smaller than $\frac{1}{2}$ when the string is not in the language.
So we can easily conclude that $\oursystem$ is sound also respect to probabilistic polytime.

\section{Probabilistic Polynomial Completeness}

There are several way to demonstrate completeness respect to some complexity class. We will show that by using language recognised by our
system we are able to encode Probabilistic Turing Machines (PTM). We will are not able to encode all possible PTMs but all the ones with 
particularly shape. This lead us to reach extensional completeness. For every problem in $\PP$ there is at least an algorithm solving that 
problem that is recognised by $\oursystem$.

A Probabilistic Turing Machine \cite{Gill77} can be seen as non deterministic TM with one tape where at each iteration are able to flip a coin and choose
between two possible transition functions to apply.

In order to encode Probabilistic Turing Machines we will proceed with the following steps:
\begin{itemize}
 \item We show that we are able to encode polynomials. In this way we are able to encode the polynomial representing the number of steps required by the machine to complete.
 \item We encode the input tape of the machine.
 \item We show how to encode the transition $\delta$ function.
 \item We put all together and we have an encoding of a PTM running in polytime.
\end{itemize}

Should be quite obvious that we can encode polynomials in $\oursystem$. Grammar and examples \ref{def:addition}, \ref{def:multiplicationconstant}, \ref{def:multiplication}, \ref{def:subtraction} give us how encode polynomials.

We need to encode the tape of our PTMs. We subdivide our tape in three sub-tapes. The left part $\tapel$, the head $\tapeh$ and the right part $\taper$. $\taper$ is encoded right to left, while the left part is encoded as usual left to right. 

Let's move on and present the encoding of transition function of PTMs. Transition function of PTMs, denoted with $\delta$, is a relation $\delta \subseteq (Q\times\Sigma)\times(Q\times\Sigma\times\{\leftarrow,\downarrow,\rightarrow\})$. Given an input state and a symbol it may give in output more tuples of state, a symbol and a direction of the head (left, no movement, right).

In the following we are going to present two procedures to encode movements of the head. It is really important to pay attention on how we encode this operations. Recall that a PTM loops the $\delta$ function and our system requires that the matrix certifying/typing the loop needs to have values of the diagonal less than $\added$. 

\begin{definition}[Move head to right]
Moving head to right means to concatenate the bit pointed by the head to the left part of the tape; therefore we need to retrieve the first bit of the right part of the tape and associate it to the head. Procedure is presented as algorithm \ref{alg:movetoright}; call it \textsc{MoveToRight()}.
\end{definition}

\begin{algorithm}
\begin{multicols}{2}
 \begin{algorithmic}
 \State
\State $\push{\htop{\tapeh},\tapel}$
\State $\iassign{\tapeh}{\termlist{}}$
\State $\push{\htop{\taper},\tapeh}$
\State $\pop{\taper}$
\end{algorithmic}
\vfill\columnbreak\vfill
Using typing rules we are able to type the algorithm with the following matrix:
$$\begin{bmatrix}
\identity & \nodep    & \nodep    & \nodep     & \nodep\\
\nodep    & \nodep    & \nodep    & \nodep     & \nodep\\
\nodep    & \nodep    & \identity & \nodep     & \nodep\\
\nodep    & \nodep    & \nodep    & \identity  &\nodep\\
\identity & \added    & \nodep    &\nodep      & \identity  \\
\end{bmatrix}$$
\end{multicols}
\caption{Move head to right}\label{alg:movetoright}
\end{algorithm}

The first column of the matrix represents dependencies for variables $\tapel$, the second represents $\tapeh$, third is $\taper$, forth is $\statemachine$ and finally recall that last column is for constants. In the following, columns of matrices are ordered in this way.

Similarly we can encode the procedure for moving the head to left and the possibility of not moving at all, that is a $\iskip$ command. So, the $\delta$ function is then encoded in the standard way by having nested \texttt{If-Then-Else} commands, checking the value of $\rand$, the state, the symbol on a tape and performing the right procedure.

\begin{algorithm}
\begin{multicols}{2}
\begin{algorithmic}
\State
\If {$\rand$}
  \If {$\test{\statemachine,1}$}
  \Else
    \If {$\test{\statemachine,2}$}
      \State $\cdots$
    \Else
      \State $\cdots$
    \EndIf
  \EndIf
\Else
  \State $\cdots$
\EndIf
\end{algorithmic}
\vfill\columnbreak\vfill
The prototype is created by nesting \texttt{If-Then-Else} commands and checking the state of the machine. 
for each branch, then, an operation of moving the head is performed. Notice that since three possible 
operations could be performed, all the nested \texttt{If-Then-Else} are typed with the following matrix:
$$\begin{bmatrix}
\identity & \nodep    & \nodep    & \nodep     & \nodep\\
\nodep    & \nodep    & \nodep    & \nodep     & \nodep\\
\nodep    & \nodep    & \identity & \nodep     & \nodep\\
\nodep    & \nodep    & \nodep    & \identity  &\nodep\\
\identity & \added    & \identity &\nodep      & \identity  \\
\end{bmatrix}$$
\end{multicols}
\caption{Prototype of encoded $\delta$ function}\label{alg:deltafunction}
\end{algorithm}

Finally, we have to put the encoded $\delta$-function inside a loop. 
The machine runs in a polynomial number of steps. Since the encoded $\delta$-function is typed with the matrix 
presented in Alg. \ref{alg:deltafunction}, we can easily see that the union closure of that matrix fits the constraints
of the typing rule of $\iloop{}{}$. We can therefore conclude that we can encode Probabilistic Turing Machine working in polytime.

\section{Polynomiality}

In this last session we will discuss why $\oursystem$ is a feasible analyser. We already shown that is sound, 
so is able to understand whenever a program does not run in Probabilistic Polynomial Time. Moreover, is also complete, respect to $\PP$;
this means that $\oursystem$ is able to recognise a lot of programs. At least one for each problem in $\PP$. The final question has to 
do with the efficiency of our system: ``how much time does it take $\oursystem$ to check a program?''.
Can be shown that $\oursystem$ is running in polytime respect to the number of variables used.

Since the typing rules are deterministic, the key problems lays on the rule \textsc{(Loop)}.

$$
\AxiomC{$
  \vdash \comone : \matrixone $} \AxiomC{$ \forall \indexone,
  {(\closure\matrixone)}_{\indexone,\indexone} < \added $}
\RightLabel{\textsc{(Loop)}} \BinaryInfC{$ \vdash
  \iloop{\bigvar_\indexthree}{\comone} :
  \mergedown{(\closure{\matrixone})}{\indexthree}$} \DisplayProof
$$

It is not trivial to understand how much it takes a union closure to be performed. While all the typing rules for all the
other commands and expressions are trivial, the one for loop needs some more explanations.
By definition, $\closure{\matrixone}$ is defined as $\cup_{\indexone} \matrixone^\indexone$. 

\begin{figure}\fbox{
\begin{minipage}{.95\textwidth}
Every matrix could be seen as adjacency matrix of a graph. 
\begin{multicols}{2}
As example, the following matrix $\matrixone$:
$$\begin{bmatrix}
 \identity    & \nodep         &\nodep      &\nodep\\
 \identity    & \identity      &\identity   &\nodep\\
 \identity    & \nodep         &\nodep      &\nodep\\
 \nodep       & \multiplied    &\identity   &\identity\\
\end{bmatrix}$$
has its own representation in the graph on the right side.
\begin{center}
 \includegraphics[scale=0.7]{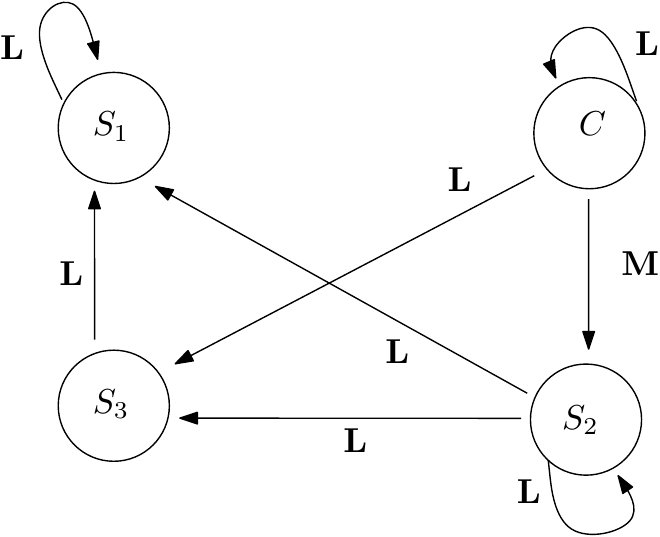}
\end{center}
\end{multicols}
\caption{Example of graph representing a matrix in $\oursystem$}\label{fig:examplegraph1}
\end{minipage}}
\end{figure}

In the example in Figure \ref{fig:examplegraph1} we can easily check that 
$C$ flows in $S_1$ with $\multiplied$ in one step. So, $\matrixone^2$ have $\multiplied$ in position $(4,1)$, 
$(4,2)$, $(4,3)$. Indeed, by using the rule of our algebra we can see how dependencies flows in the graph.

How many unions have to be performed in order to calculate $\cup_{\indexone} \matrixone^\indexone$?
In order to answer to this question, we can prove the following theorem.

\begin{theorem}[Polynomiality]
 Given a squared matrix $\matrixone$ of size $\dimensionone$ and $\matrixtwo=\bigcup_i{\matrixone^\indexone}$, 
 we get that $\matrixtwo = \bigcup_{\indexone < \dimensionone^2}\matrixone^\indexone$. Union closure can be calculated by considering
 just the first $\dimensionone^2$ matrix power.
\end{theorem}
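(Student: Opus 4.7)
My plan is to bound the stabilization of the monotone sequence $U_K := \bigcup_{\indexone \le K} \matrixone^\indexone$ using the finite-height lattice structure of matrices over $\values$.

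First I would observe that $U_K$ is componentwise non-decreasing, since $U_{K+1} = U_K \cup \matrixone^{K+1} \ge U_K$. Each of the $\dimensionone^2$ entries takes values in the four-element chain $\nodep < \identity < \added < \multiplied$, so can strictly increase at most three times along the sequence. Summing over entries, the sequence can undergo at most $3\dimensionone^2 = O(\dimensionone^2)$ strict-increase steps in total.

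The key lemma I would prove next is that stabilization propagates: once $U_K = U_{K+1}$, one has $U_{K+m} = U_K$ for every $m \ge 0$. The hypothesis gives $\matrixone^{K+1} \le U_K$ componentwise; for the inductive step I would expand $(\matrixone^{K+2})_{i,j} = \sum_k \matrixone_{i,k} \cdot (\matrixone^{K+1})_{k,j}$, replace each factor $(\matrixone^{K+1})_{k,j}$ by the entry of some $\matrixone^{p_k}$ with $p_k \le K$ witnessing the bound, and argue via a case analysis on the semiring tables that the resulting cross-terms are still captured inside $U_K$. Induction on $m$ then extends the claim. Combining this with the counting of strict increases, the sequence $U_0 < U_1 < \cdots$ strictly increases at most $3\dimensionone^2$ times before stabilizing permanently; hence $U_K = \matrixtwo$ for all $K \ge 3\dimensionone^2$, and absorbing the constant yields the stated bound of $\dimensionone^2$.

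The main obstacle is the propagation lemma. In a fully idempotent semiring one could write $U_K = (\matrixID \cup \matrixone)^K$ using distributivity of $\cdot$ over $\cup$, making propagation immediate by associativity of powers. Here, however, $\identity + \identity = \added \ne \identity$, so matrix multiplication only satisfies the inequality $\matrixone \cdot (A \cup B) \ge (\matrixone \cdot A) \cup (\matrixone \cdot B)$ (possibly strict). Handling this gap requires the graph-theoretic interpretation of $(\matrixone^k)_{i,j}$ as a semiring sum over length-$k$ paths in the weighted digraph on $\dimensionone$ vertices, and showing that $\matrixone^{K+1} \le U_K$ forces every path-based promotion to $\added$ or $\multiplied$ arising at length $K+2$ to already be witnessed at some length $\le K$.
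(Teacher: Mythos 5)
The pivot of your argument---the propagation lemma stating that $U_K = U_{K+1}$ forces $U_{K+m}=U_K$ for all $m$---is false in this algebra, and the counterexample is small. Take $\dimensionone=4$ and let $\matrixone$ be the adjacency matrix (all nonzero entries equal to $\identity$) of the digraph with edges $1\to 2$, $2\to 1$, $1\to 3$, $3\to 4$, $4\to 1$, i.e.\ a $2$-cycle and a $3$-cycle sharing vertex $1$. Every ordered pair of vertices is joined by a walk of length at most $3$, and for every length up to $4$ each pair is joined by at most one walk, so with your notation $U_K:=\bigcup_{\indexone\le K}\matrixone^\indexone$ one gets that $U_3$ is the all-$\identity$ matrix, $\matrixone^4\le U_3$, and hence $U_4=U_3$. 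Yet at length $5$ there are two distinct closed walks at vertex $1$ (the $2$-cycle followed by the $3$-cycle, or the other way round), so $(\matrixone^5)_{1,1}=\identity+\identity=\added>(U_4)_{1,1}$ and $U_5\neq U_4$. A one-step plateau therefore does not persist: precisely because $+$ is not idempotent ($\identity+\identity=\added$), a promotion can first occur at a length strictly beyond a plateau, and the graph-theoretic claim you hoped would close the gap (``every promotion arising at length $K+2$ is already witnessed at some length $\le K$'') is exactly what this example refutes. With the propagation lemma gone, your counting of strict increases (at most $3\dimensionone^2$ of them) no longer bounds the index of the \emph{last} increase, since nothing prevents long plateaus between increases; the whole bound collapses.

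Note that the paper's argument is entry-wise rather than global: reading $\matrixone$ as a weighted digraph, it bounds for each pair $(\indexone,\indextwo)$ the first power at which each value can appear. A $\multiplied$ entry needs a single walk through an $\multiplied$-edge and so shows up within $O(\dimensionone)$ powers; an $\added$ entry needs either a walk through an $\added$-edge or two distinct walks of equal length, and the latter is detected on pairs of vertices, i.e.\ on $\dimensionone^2$ states, which is where the $\dimensionone^2$ horizon comes from; if neither happens within that horizon the entry has already reached its final value ($\identity$ or $\nodep$). To repair your proof you would have to replace the plateau/propagation step by first-appearance bounds of this kind; a fixed-point-detection argument on the sequence $U_K$ cannot work over this semiring.
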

\begin{proof}
 Here is the scratch of the proof. Since the matrix is an encoding of a flow graph, we can see the matrix as a graph of dependencies
 between stacks size. Recall that the union is component-wise, so we can focus on a singular element of a matrix. Given two nodes $S_1$ and
 $S_2$ of our graph, let's check all the possibilities:
 \begin{itemize}
  \item The expected value is $\multiplied$. If so, after no more than $\numeone$ iteration of $\matrixone$ we should have found it.
  If not, there are no possibilities to have $\multiplied$ in that position. After $\numeone$ iteration, the information has flown 
  through all the nodes.
  \item The expected value is $\added$. We need to iterate more than $\numeone$ times. Indeed $\added$ value can be found also by 
  adding $\identity+\identity$. In the flow-graph relation, this means finding two distinct paths from node $S_1$ to $S_2$. This can be
  easily done by encoding two paths in one. By generating all the possible pairs of nodes, we can easily see that the number of steps
  to find, if exists, two distinct paths takes $\dimensionone^2$ number of steps (number of all pairs).
  \item If after $\dimensionone^2$ steps no $\multiplied$ or $\added$ value has been found, the maximum value found is the correct one. 
  Indeed, if no dependence has been found or if just a linear dependence has been found, no further iteration could change the final
  value.
 \end{itemize}
\end{proof}

\section{Conclusions}

We presented an ICC system characterising the class $\PP$. There are several improvements respect to the known systems in literature.
We can catalogue them in two sets. First, we extend the known system to probabilistic computations, being able to characterise $\PP$. 
Since the typing requires polynomial time, it is feasible to use $\oursystem$ as a static analyser for complexity. 
The typing/certificate gives also information about the polynomial bound.
On the other hand, respect to sequential computations, we presented a finer analysis. $\oursystem$ works over a concrete language
and takes care of constants and function calls.
For all of these reasons, we are able to show a program that cannot be typed correctly by Kristiansen and Jones \cite{Jones2009a}.

\begin{algorithm}
\begin{multicols}{2}
 \begin{algorithmic}
  \LOOP{$S_1$}
    \State $\iassign{S_2}{0 * S_2}$
  \ENDLOOP
 \end{algorithmic}
 \vfill\columnbreak\vfill
 That is typed with the identity matrix $\matrixsubstitution{\matrixID}{2}{\vectorO}$. For multiplication we use the implementation in def \ref{def:multiplicationconstant}.
 \end{multicols}
  \label{alg:recognizedprogram}\caption{Example of recognised program}
\end{algorithm}

Since every constant is abstracted as a variable in \cite{Jones2009a}, they cannot for sure recognise that this program runs in polytime 
and for this reason this program should be rejected. Once abstracted it is impossible to know the value of the constant.
Of course, everything depends on how the abstraction is made.
In general, every program which deals with constants could appear problematic 
in \cite{Jones2009a} \cite{BenAmram2008}; At least, for a lot of programs, their bounds are bigger.
Moreover, as they wrote in \cite{BenAmram2008}: ``Note that no procedure
for inferring complexity will be complete for $L_{concrete}$'', while our procedure is sound and complete for our concrete language.

Finally we would like to point out some future direction:
\begin{itemize}
 \item Integrating the analysis with new features in order to capture more programs.
 \item Apply our analysis to a more generic imperative programming language.
 \item Extending the algebra in such way that the associated certificates would tell more detailed
information about the polynomial bounding the complexity.
 \item Make a finer analysis in order to be sound and complete for $\BPP$.
\end{itemize}

\bibliographystyle{plain}
\bibliography{bibliography}
\end{document}